\documentclass[final,nomarks]{macros/dmtcs-episciences}

\author{Kehinde Adeogun \and Christos Kapoutsis}
\affiliation{Carnegie Mellon University in Qatar}

\title
[A quadratic lower bound for 2DFAs against one-way liveness]
{A quadratic lower bound for\\2DFAs against one-way liveness%
\thanks{Supported by Qatar Foundation, via CMUQ's Seed Research program and project \textit{Minicomplexity}. All statements made herein are the responsibility of the authors.}}

\keywords{two-way finite automata, size complexity, Sakoda-Sipser conjecture, minicomplexity}

\publicationdata
{vol. ??:? special issue for ???}
{202?}
{?}
{???/dmtcs.???}
{2000-1-1; None}
{2000-1-1}

\usepackage{pdflscape}
\usepackage{bbold}
\usepackage{tgtermes}

\usepackage{amstext}               
\usepackage[greek,english]{babel}  
\usepackage[utf8x]{inputenc}       

\newcommand{\MACH}[1]{\text{\scshape\mdseries\rmfamily #1}}
\newcommand{\MACHNUM}[1]{\oldstylenums{\upshape#1}}
\newcommand{\MACHs}{\text{\upshape\mdseries\rmfamily s}}

\newcommand{\fa  }{\MACH{fa}}  

\newcommand{\MACHdeterministic         }{\MACH{d}}

\newcommand{\MACHnondeterministic      }{\MACH{n}}

\newcommand{\MACHoneway   }{\MACHNUM{1}}

\newcommand{\MACHtwoway   }{\MACHNUM{2}}

\newcommand{\onfa   }{{\MACHoneway\MACHnondeterministic\fa}}

\newcommand{\onfas  }{{\onfa\MACHs}}

\newcommand{\tdfa   }{{\MACHtwoway\MACHdeterministic\fa}}

\newcommand{\tnfa   }{{\MACHtwoway\MACHnondeterministic\fa}}

\newcommand{\tdfas  }{{\tdfa\MACHs}}

\newcommand{\tnfas  }{{\tnfa\MACHs}}

\usepackage{amssymb}        
\usepackage{amsmath}        

\newcommand{\bbB}{\mathbb{B}}

\newcommand{\ggS}{\varSigma}

\newcommand{\ggV}{\varOmega}

\newcommand{\gga}{\alpha}
\newcommand{\ggb}{\beta}

\newcommand{\ggd}{\delta}
\newcommand{\gge}{\epsilon}

\newcommand{\ggu}{\vartheta}

\newcommand{\ggl}{\lambda}

\newcommand{\something}{{\,\cdot\,}}
\newcommand{\kkk}{{,}}

\newcommand{\ppp}{{+}}
\newcommand{\mmm}{{-}}
\newcommand{\xxx}{{\times}}

\newcommand{\atmost}{{\leq}\,}
\newcommand{\atleast}{{\geq}\,}

\newcommand{\subs }{\subseteq }
\newcommand{\sups }{\supseteq }

\newcommand{\s}{^*}
\newcommand{\z}{_*}

\newcommand{\fml}[2][1]{(#2_h)_{h\geq#1}}

\newcommand{\start  }{_\textup{s}}

\newcommand{\accept }{_\textup{a}}

\newcommand{\reject }{_\textup{r}}

\newcommand{\lend}{{\vdash}}
\newcommand{\rend}{{\dashv}}
\newcommand{\lrends}{\ensuremath{\{\lend\kkk\rend\}}}
\newcommand{\lrend}[1]{\mbox{$\lend#1\rend$}}

\newcommand{\tagL   }{\textup{\textsc{l}}}

\newcommand{\tagR   }{\textup{\textsc{r}}}
\newcommand{\tagLR  }{\textup{\textsc{lr}}}
\newcommand{\tagRL  }{\textup{\textsc{rl}}}
\newcommand{\tagsLR }{\ensuremath{\{\tagL\kkk\tagR\}}}

\newcommand{\subLR}{_\textup{\textsc{lr}}}
\newcommand{\subRL}{_\textup{\textsc{rl}}}

\newcommand{\comp  }{\textsc{comp}}
\newcommand{\lcomp }{\textsc{lcomp}}
\newcommand{\rcomp }{\textsc{rcomp}}

\usepackage{graphicx}             
\usepackage[usenames]{color}      

\newcommand{\figdir}{figures}

\graphicspath{{\figdir/}}

\usepackage{amstext}        
\usepackage{xstring}        

\newcommand{\PROB    }[1]{\text{\scshape\mdseries\rmfamily #1}}
\newcommand{\PROBMATH}[1]{\text{\scriptsize$#1$}}

\newcommand{\tttowl         }{\PROBMATH{3}\PROB{owl}}
\newcommand{\nowl     }[1][n]{\IfInteger{#1}{\PROBMATH{#1}}{\ensuremath{#1}}\PROB{owl}}

\newcommand{\owl            }{\PROB{owl}}

\usepackage{amstext}               
\usepackage[greek,english]{babel}  
\usepackage[utf8x]{inputenc}       

\newcommand{\CLSS   }[1]{\text{\upshape\mdseries\sffamily #1}}
\newcommand{\CLSSNUM}[1]{\CLSS{#1}}

\newcommand{\CLSSnondeterministic      }{\CLSS{N}}

\newcommand{\CLSSoneway   }{\CLSSNUM{1}}

\newcommand{\onen   }{{\CLSSoneway\CLSSnondeterministic}}

\usepackage{amsthm}    

\theoremstyle{plain}
\newtheorem{lemma}{Lemma}
\newtheorem{claim}{Claim}
\newtheorem{theorem}{Theorem}

\theoremstyle{definition}
\newtheorem{definition}{Definition}

\theoremstyle{remark}
\newtheorem{example}{Example}

\newcommand{\qqed}{$\boxdot$}

\newcommand{\ID}{\text{id}}
\newcommand{\bbBh}{\ensuremath{\bbB^{h\times h}}}
\newcommand{\ONE}{\mathfrak{1}}
\newcommand{\ZERO}{\mathfrak{0}}
\newcommand{\ZEROh}{\mathrm{0}_h}
\newcommand{\IDTYh}{\mathrm{I}_h}

\begin{document}
\maketitle

\begin{abstract}
We show that every \textit{two-way deterministic finite automaton} ({\oldstylenums{\upshape 2}}{\scshape\mdseries\rmfamily dfa}) that solves \textit{one-way liveness} on height~$h$ has $\ggV(h^2)$~states. This implies a quadratic lower bound for converting \textit{one-way nondeterministic finite automata} to {\oldstylenums{\upshape 2}}{\scshape\mdseries\rmfamily dfa}s, which asymptotically matches Chrobak's well-known lower bound for this conversion on unary languages. In contrast to Chrobak's simple proof, which relies on a {\oldstylenums{\upshape 2}}{\scshape\mdseries\rmfamily dfa}'s inability to differentiate between any two sufficiently distant locations in a unary input, our argument can be applied to inputs over any alphabet and is structured around a main lemma that is general enough to potentially be reused elsewhere.
\end{abstract}

\section{Introduction}
\label{sec:introduction}

Back in the 1970s, Sakoda and Sipser \cite{sasi78} (but also Seiferas~\cite{se73}) asked a fundamental question, which to this day remains an important open problem in automata and complexity theory: Can every \textit{two-way nondeterministic finite automaton} (\tnfa) with some number of states~$s$ be simulated by a \textit{two-way deterministic finite automaton} (\tdfa) with a number of states~$f(s)$ which is polynomial in~$s$? 

A standard way to approach this question is to study variant cases: cases where either the \tdfa\ is enhanced or the \tnfa\ is restricted, if we want to prove an upper bound for the respective variant~$f(s)$; or cases where either the \tdfa\ is restricted or the \tnfa\ is enhanced, if we want to prove a lower bound for the respective variant~$f(s)$.

Indeed, a \textit{polynomial} upper bound is known when the simulating \tdfa\ is enhanced into some restricted single-tape Turing machine: a \textit{weight-reducing} one (i.e., one that overwrites symbols only by ``lighter'' ones) or a \textit{bounded} \&\ \textit{linear-time} one (i.e., a linear-time one that writes only within the input)~\cite{gppp22}. Likewise, a \textit{quasi-polynomial} upper bound is known when the simulated \tnfa\ is restricted to \textit{unary} (i.e., accepting only unary inputs)~\cite{gemepi03}; or \textit{letter-bounded} (i.e., accepting only inputs of the form $a_1\s a_2\s\cdots a_k\s$ for fixed~$k$)~\cite{cpp25}; or \textit{outer-nondeterministic} (i.e., using nondeterminism only on the endmarkers)~\cite{gegupi14}. 

On the flip side, an \textit{exponential} lower bound is known when the simulating \tdfa\ is restricted to \textit{single-pass} (i.e., halting upon reaching an endmarker)~\cite{se73}; or \textit{sweeping} (i.e., reversing only on the endmarkers)~\cite{si80b}; or \textit{(almost) oblivious} (i.e., following sub-linearly many trajectories per input length)~\cite{hrsc03}; or \textit{few-reversal} (i.e., reversing sub-linearly often)~\cite{ka13iacc}; or \textit{logical} (i.e., satisfying some formula on ``reachability variables'' in each state)~\cite{hkks13,bihrko15,ra19}; and an \textit{infinite} lower bound is known when the \tdfa\ is a \textit{mole} (i.e., it only explores the configuration graph of the \tnfa)~\cite{ka07jalc}. Likewise, an \textit{exponential} lower bound is known when the simulated \tnfa\ is enhanced into a \textit{single-pebble} \tnfa\ (i.e., one equipped with a pebble that it can drop on and lift from an input cell)~\cite{geis09dcfs}; or into a \textit{1-limited automaton} (i.e., a single-tape nondeterministic Turing machine that  can overwrite an input cell when it first visits it)~\cite{pi19ifip}, even a \textit{once-marking} one (i.e., one that can only place a mark once on a single cell of the input)~\cite{pipr23afl}.


In this landscape, a natural question is: \textit{What about lower bounds for unrestricted \tdfas?} 

There are only two results of this kind and, quite counterintuitively, both are obtained in scenarios where the simulated \tnfa\ is (not in any way enhanced, but instead) severely restricted! 

The first one is Chrobak's well-known bound $f(s)=\ggV(s^2)$ for when the simulated \tnfa\ is both \textit{unary} (as above) and \textit{one-way} (i.e., moving its input head only forward; namely, a standard \textit{one-way nondeterministic finite automaton} or \onfa) \cite[Thm.~6.3]{ch86} (cf.\ also~\cite{to09}). Expectedly, the simple proof of this result relies heavily on the inability of the simulating \tdfa\ to differentiate between distant locations on the unary input. Moreover, a matching upper bound \cite[Thm.~6.2]{ch86} shows that no better lower bound is possible in this special case.     

The second result is the bound $f(s)=\ggV(s^2/\log s)$ of~\cite[Lemma~7]{ka18} for the case where the simulated \tnfa\ is both \textit{one-way} (again) and \textit{accepting only inputs of length~$3$}.\footnote{This is (easily) an equivalent restatement of \cite{ka18}'s lower bound for~\tttowl.} The (counting) argument for this second result is also quite simple. Moreover, here too, the possibility of a better lower bound is removed by a (quite non-trivial) matching upper bound~\cite[Lemma~8]{ka18}.   

Our present study offers a third lower bound for unrestricted \tdfas. 
As in the two existing results, we focus on \tnfas\ that are \textit{one-way}, i.e., \onfas; and, like Chrobak, we prove $f(s)=\ggV(s^2)$. Unlike the two existing results, however, our \onfas\ work on alphabets of any size and inputs of any length, which forces us into a much deeper analysis of the relevant computations. 

We express our argument in terms of \textit{one-way liveness} (\owl), the complete language (family) for the conversion of \onfas\ to \tdfas\ \cite[Thm.~2.3]{sasi78}. Specifically, we prove that \textit{every \tdfa\ that solves \owl\ on height~$h$ has $\ggV(h^2)$ states}, which directly implies that $f(s)=\ggV(s^2)$, since $\owl$ on height~$h$ is solved by a \onfa\ with $h$~states \cite[Thm.~2.3i]{sasi78}.    

After some preliminaries (Sect.~\ref{sec:preparation}), we recall (from~\cite{ka07jalc}) the basic definitions and facts about \textit{generic strings} (Sect.~\ref{sec:generic-strings}); and show how these force a \tdfa\ to use its states on their outer boundaries (Sect.~\ref{sec:exit-sizes}). We then prove our main lemma (Lemma~\ref{lem:main}), that a suitably selected pair of properties can force a \tdfa\ to spend $\atleast1$ state just to distinguish between them. Finally, we identify a quadratically long sequence of properties where every two successive ones are as described in the lemma (Sect.~\ref{sec:the-lower-bound}). This then implies the promised lower bound.

An earlier version of this work appeared in~\cite{adka26sofsem}. Here, we offer a strictly stronger quadratic lower bound ($\frac12\binom{h+1}{2}$, instead of only $\frac12\binom{h}{2}$), as we have the space to also include Stage~2 and Stage~3 of our full construction (in Section~\ref{sec:connectivities}) and the analysis of the respective connectivities (in Section~\ref{sec:basic-facts}). We also offer a better exposition, with more discussion, intuition, and examples.

\section{Preparation}
\label{sec:preparation}

For $n\geq1$, let $[n]:=\{1,2,\dots,n\}$. The set~$\ggS\s$ consists of all finite strings over~$\ggS$. If $z\in\ggS\s$, we let $|z|$, $z_i$, $z^i$ be its length, its $i$th~symbol ($i\in [|z|]$), and the concatenation of $i$ copies of itself ($i\geq0$). If $f:S\to S$, then $f^i$ is the composition of $i$~copies of~$f$ ($i\geq0$). We let $\bbB:=\{0\kkk1\}$ be the set of Boolean values and $\bbB^{n\times m}$ be the set of all $n\times m$ Boolean matrices ($n,m\geq1$).

\subsection{Two-way finite automata}
\label{sec:2dfas}

A \textit{two-way deterministic finite automaton} (\tdfa) is a tuple of the form $M=(Q,\ggS,\ggd,q\start,q\accept,q\reject)$, where $Q$~is a set of \textit{states}; $\ggS$~is an \textit{input alphabet}; the states $q\start\kkk q\accept\kkk q\reject\in Q$ are the \textit{start}, \textit{accept}, and \textit{reject} states, respectively; and $\ggd:Q\times(\ggS\cup\lrends)\to Q\times\tagsLR$ is a total \textit{transition function}, where $\lend\kkk\rend\not\in\ggS$ are the left and right endmarkers, and $\tagL\kkk\tagR$ are the two directions. An input $z \in \ggS\s$ is presented to $M$ between the endmarkers, as $\lrend z$. The computation starts at $q\start$ on the left endmarker. In each step, the next state and head move are derived from~$\ggd$ and the current state and symbol. Endmarkers are never violated, except for~$\rend$ when the next state is $q\accept$ or $q\reject$. Hence, the computation can either loop; or fall off~$\rend$ into one of $q\accept\kkk q\reject$.

Formally, the \textit{computation} of~$M$ from state~$p$ and ``the $j$th symbol of string~$z$'' (where $0\leq j\leq|z|\ppp1$), denoted by $\comp_{M,p,j}(z)$, is the unique sequence $c=(q_t, j_t)_{0\leq t<1+m}$ of state-position pairs, called \textit{configurations}, where $0\leq m\leq\infty$; $q_0:=p$ and $j_0:=j$; for all $0\leq t<m$, the $t$-th position satisfies $1\leq j_t\leq |z|$; and every next state~$q_{t+1}$ and position~$j_{t+1}$ are derived from~$\ggd$ and the current state~$q_t$ and symbol~$z_{j_t}$. If $m=\infty$, we say the computation \textit{loops}. Otherwise, $m$~is finite and the last position~$j_m$ is either $|z|\ppp1$ or~$0$, in which cases we say the computation \textit{hits right} or \textit{hits left} into~$q_m$, respectively.%
\footnote{Note that, if the starting position~$j$ is~$0$, then the computation~$c$ starts at the imaginary position just to the left of the leftmost boundary of $z$, so it immediately hits left (into~$p$). Likewise, if $j=|z|\ppp1$, then $c$~starts at the imaginary position just to the right of the rightmost boundary of $z$, so it immediately hits right into~$p$. In both cases, it is $m=0$ and $c=\bigl((q_0,j_0)\bigr)=\bigl((p,j)\bigr)$.\label{foot:computation-from-outside-input}} 
If $j=1$ or $j=|z|$, we get the \textit{left computation} and \textit{right computation} of~$M$ from~$p$ on~$z$, respectively:\footnotemark
\[
\lcomp_{M,p}(z):=\comp_{M,p,1}(z)
\qquad\text{and}\qquad
\rcomp_{M,p}(z):=\comp_{M,p, |z|}(z) \,.
\]
\footnotetext{Note that, if $z=\gge$ is the empty string, then every left computation on~$z$ hits right, since it starts (and ends immediately) at position $j=1=|z|\ppp1$. Likewise, every right computation on~$z$ hits left, since it starts (and ends immediately) at position $j=|z|=0$. See also Footnote~\ref{foot:computation-from-outside-input}.} 
Then the \textit{computation of~$M$ on $z$} is its left computation from~$q\start$ on $\lrend z$, in symbols 
\[
\comp_M(z):=\lcomp_{M,q\start}(\lrend z) \,;
\]
and $M$~\textit{accepts}~$z$ if this computation hits right into~$q\accept$. As usual, $M$~\textit{solves} a language $L\subs\ggS\s$ if it accepts exactly its strings: $M$~accepts~$z\iff z\in L$. 

\subsection{One-way liveness}
\label{sec:owl}

Let $h\geq1$. The alphabet $\ggS_h:=\mathcal{P}([h]\xxx[h])$ consists of every two-column directed graph with $h$~nodes per column and only rightward arrows (Fig.\,\ref{fig:owl}{\textsf{\scriptsize L}}). An $n$-long string $z\in\ggS\s_h$ is naturally viewed as a graph of $n\ppp1$~columns, indexed from~$0$ to~$n$, where edges connect successive columns and, for simplicity, are undirected (Fig.\,\ref{fig:owl}{\textsf{\scriptsize M}}). If there exists an $n$-long path from some node of column~$0$ to some node of column~$n$, then $z$~is \textit{live}; otherwise, it is \textit{dead}. Checking that a given $z\in\ggS_h\s$ is live is the \textit{one-way liveness} problem for height~$h$, denoted by~$\owl_h$. 

\begin{figure}[t]
\centering
\begin{tabular}{ccc}
\raisebox{0.00cm}{\includegraphics[height=2.4cm]{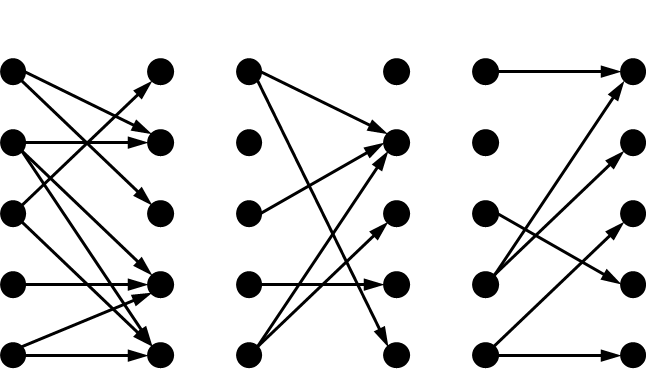}} & \qquad\quad
\raisebox{0.00cm}{\includegraphics[height=2.4cm]{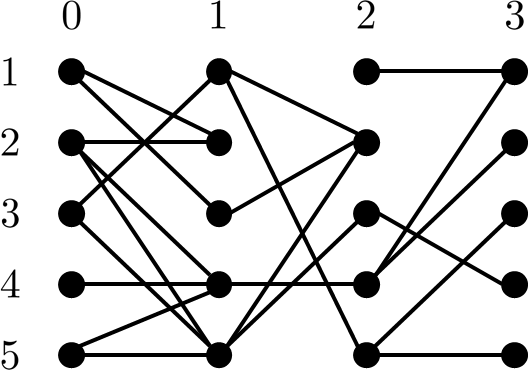}} & \qquad\quad
\raisebox{0.9cm}{
$\begin{bmatrix}
0 & 0 & 0 & 0 & 0\\
1 & 1 & 0 & 1 & 0\\
0 & 0 & 1 & 1 & 1\\
1 & 1 & 0 & 0 & 0\\ 
1 & 1 & 0 & 1 & 0\\ 
\end{bmatrix}$}
\end{tabular}
\caption{Three symbols in~$\ggS_5$~({\textsf{\scriptsize L}}); their string~({\textsf{\scriptsize M}}); and its connectivity~({\textsf{\scriptsize R}}).}
\label{fig:owl}
\end{figure}

Given any $n$-long $z\in\ggS_h\s$, its \textit{connectivity} $C(z)\in\bbBh$ is the $h \times h$ Boolean matrix where cell $(i,j)$ is~$1$ iff $z$ contains an $n$-long path from node~$i$ of column~$0$ to node~$j$ of column~$n$ (Fig.\,\ref{fig:owl}{\textsf{\scriptsize R}}). Easily, concatenation of strings corresponds to \textit{Boolean multiplication} of connectivities, namely $C(xy) = C(x)C(y)$, where every $+$ and~$\cdot$ on Boolean values is $\vee$ and~$\wedge$, respectively. So, equivalently, $\owl_h$ is defined over the alphabet $\bbBh$ and consists in checking that a given string of Boolean matrices of dimension $h\times h$ has non-zero Boolean product.

The problem is important in studying how easily \tdfas\ can simulate \textit{one-way nondeterministic finite automata} (\onfas, as typically defined~\cite{rasc59,si12}). Formally, the family $\owl=\fml\owl$ is complete, under homomorphic reductions, for the class \onen\ of all language families $L=\fml L$ that can be recognized by \tdfa\ families $M=\fml M$ where the number of states grows polynomially in~$h$~\cite{sasi78}. Intuitively, $\owl_h$~captures in its instances all computations of all $h$-state \onfas. So, if a \tdfa\ can solve $\owl_h$ with $\atmost s(h)$~states, then every $h$-state \onfa\ can be simulated by a \tdfa\ with $\atmost s(h)$ states; and, conversely, if every \tdfa\ for $\owl_h$ needs $\atleast s(h)$~states, then some $h$-state \onfa\ requires $\atleast s(h)$ states in every \tdfa\ that simulates it.

\subsection{Properties}
\label{sec:properties}

A \textit{property} (of the strings) over an alphabet~$\ggS$ is any set $P\subs\ggS\s$. If $x\in P$, we also say that ``$x$~has~$P$''. We are interested in properties over $\ggS_h$ that are defined by connectivities: 

\begin{definition}
For any $C\in \bbBh$, we let $P(C):=\{ z\in\ggS_h\s \mid C(z)=C \}$ be the property of all strings over $\ggS_h$ that have connectivity~$C$.
\end{definition}

A property~$P$ over an alphabet~$\ggS$ is~\textit{smooth} if it is ``closed under concatenation with infix'', meaning that every two strings in the property have an infix over $\ggS$ that keeps their concatenation in the property, namely: $(\forall x\in P)(\forall z\in P)(\exists y\in\ggS\s)(xyz\in P)$.

\begin{example}\label{exa:smooth}
For any $i,j\in[h]$, let $C_{i,j}\in\bbBh$ be the matrix where cell $(i,j)$ is the only cell that contains~$1$. Then property $P_{i,j}:=P(C_{i,j})$ is smooth, because any two strings $x$ and~$z$ in it have the one-symbol infix $y:=\{(j,i)\}$ that ensures that $xyz$ is also in $P_{i,j}$.    
\end{example}

Given two properties, $P$ and~$P'$ over an alphabet $\ggS$, we say that $P$~\textit{\tagLR-extends} to~$P'$ if every string in~$P$ can be \tagLR-extended into a string in~$P'$, namely if $(\forall x\in P)(\exists y\in\ggS\s)(xy\in P')$.

\begin{example}\label{exa:LR-extends}
For any $i,j\in[h]$, let $P_{i,j}$ be as in Example~\ref{exa:smooth} and let $P_{i,\ast}:=\bigcup_k P_{i,k}$ consist of every string whose connectivity has~$1$ in some cell in row~$i$ and in no other cells. Then property $P_{i,\ast}$ \tagLR-extends to property $P_{i,j}$ because, for any string $x\in P_{i,\ast}$ whose connectivity has its only~$1$ in a cell~$(i,k)$, the one-symbol suffix $y:=\{(k,j)\}$ is such that $xy$ is in~$P_{i,j}$.    
\end{example}

Sometimes it is possible to find a single suffix~$y$ that can \tagLR-extend into~$P'$ every~$x$ from~$P$ and from~$P'$, namely $(\exists y\in\ggS\s)(\forall x\in P\cup P')(xy\in P')$. We then say that \textit{$P$~has a suffix into~$P'$}. 

\begin{example}\label{exa:suffix-into}
For any $i,j\in[h]$, let $P_{i,\ast}$ and $P_{i,j}$ be as in Example~\ref{exa:LR-extends}. Then $P_{i,\ast}$ (not only \tagLR-extends to $P_{i,j}$, but even) has a suffix into $P_{i,j}$. Indeed, the one-symbol suffix $y:=\{(k,j)\mid  k\in[h]\}$, that connects every node on the left to the $j$-th node of the right, forces $xy$ to be in~$P_{i,j}$, no matter what $x\in P_{i,\ast}\cup P_{i,j} = P_{i,\ast}$ we select.    
\end{example}

In even more special cases, the suffix that \tagLR-extends into $P'$ all strings of $P\cup P'$ can also be chosen to end in any way $v\in P$ that we want: $(\forall v\in P)(\exists u\in\ggS\s)(\forall x\in P\cup P')(xuv\in P')$. We then say that \textit{$P$~has suffix of choice into~$P'$}. 

\begin{example}\label{exa:suffix-of-choice}
Let $I,J\in\bbBh$ be the identity matrix and the matrix with $1$s in all of the first row and nowhere else, respectively. Then $P:=P(I)$ is the property of having $h$ disjoint paths that connect every leftmost node to the corresponding rightmost node; and $P':=P(I+J)$ is like $P$ except that the top leftmost node also connects to all rightmost nodes. We argue that $P$~has suffix of choice into $P'$. Indeed, pick any $v\in P$. Choose $u\in P'$ (any such $u$ will do) and consider the suffix $y:=uv$. Easily, for every $x\in P\cup P'$, appending $y$ forces $xy$ into~$P'$.
\end{example}

Symmetric variants of the above relations are also relevant: we say that \textit{$P$ \tagRL-extends to~$P'$} if $(\forall x\in P)(\exists y\in\ggS\s)(yx\in P')$; that \textit{$P$~has a prefix into~$P'$} if $(\exists y\in\ggS\s)(\forall x\in P\cup P')(yx\in P')$; and that \textit{$P$ has a prefix of choice into~$P'$} if $(\forall v\in P)(\exists u\in\ggS\s)(\forall x\in P\cup P')(vux\in P')$.

Finally, we say that two properties $P,P'\subs\ggS\s$ are \textit{separated} by a language~$L\subs\ggS\s$ if every two of their strings have a context where the choice between them affects membership in~$L$: 
\[
(\forall x\in P)(\forall z\in P')(\exists u\in\ggS\s)(\exists v\in\ggS\s)(uxv\in L \iff uzv\not\in L) \,.
\]
In particular, distinct connectivities define properties that are separated by~\owl:

\begin{lemma}\label{lem:distinct-conns-separated}
Let $C,C'\in\bbBh$. If $C\neq C'$ then $P(C),P(C')$ are separated by~$\owl_h$.
\end{lemma}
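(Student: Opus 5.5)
Since $C\neq C'$, we first fix a cell $(i,j)$ on which they differ. Without loss of generality $C$ has $1$ in cell $(i,j)$ and $C'$ has $0$ there; the other case is symmetric, because the separation condition only asks that membership in $\owl_h$ differ. Now take arbitrary $x\in P(C)$ and $z\in P(C')$. We must produce a context $u,v\in\ggS_h\s$ such that exactly one of $uxv$, $uzv$ is in $\owl_h$. The context will isolate cell $(i,j)$ of the connectivity.

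We use one-symbol strings. Let $u:=\{(i,i)\}$, the symbol whose only edge joins node $i$ of its left column to node $i$ of its right column. Let $v:=\{(j,j)\}$ likewise. Then $C(u)$ is the matrix with a single $1$ at $(i,i)$, and $C(v)$ has its single $1$ at $(j,j)$. By the multiplicativity $C(uxv)=C(u)C(x)C(v)$, the product $C(u)\,C(x)\,C(v)$ has only row $i$ and column $j$ possibly nonzero. Its sole possibly-nonzero cell is $(i,j)$, and that cell equals $C(x)_{i,j}=C_{i,j}$. So $C(uxv)$ is nonzero exactly when $C_{i,j}=1$, and likewise $C(uzv)$ is nonzero exactly when $C'_{i,j}=1$.

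Since a string is in $\owl_h$ iff its connectivity is non-zero (its Boolean product is non-zero, as noted in Section~\ref{sec:owl}), we get $uxv\in\owl_h$ and $uzv\notin\owl_h$. This proves the separation.

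The only subtlety is to verify carefully that the matrix of the one-symbol string $\{(i,i)\}$ is as claimed and that the triple product picks out exactly entry $(i,j)$. Both follow directly from the definition of Boolean matrix multiplication, so we expect no real obstacle. Strings of length~$0$ cause no trouble here, since $u$ and $v$ are nonempty and the multiplicativity identity applies in any case.
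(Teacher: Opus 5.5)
Your proof is correct and follows the paper's argument: fix a cell $(i,j)$ where $C$ and $C'$ differ, then wrap $x$ and $z$ in one-symbol contexts that isolate that cell, so that exactly one of $uxv,uzv$ has non-zero connectivity. The only difference is cosmetic. The paper uses $u=\{(1,i)\}$, $v=\{(j,1)\}$ and argues directly with paths, whereas you use $u=\{(i,i)\}$, $v=\{(j,j)\}$ and check the isolation through the Boolean product $C(u)C(x)C(v)$.
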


\begin{proof}
Since $C\neq C'$, there exists cell $(i,j)$ such that $C(i,j)\neq C'(i,j)$. Without loss of generality, assume $C(i,j)=0$ and $C'(i,j)=1$. Now pick any $x\in P(C)$ and $z\in P(C')$. Then $x$~has no path from the $i$th leftmost node to the $j$th rightmost node, but $z$~does. Consider the symbols $u:=\{(1,i)\},v:=\{(j,1)\}\in\ggS_h$. Clearly, $uxv$ is dead and $uzv$ is live.
\end{proof}

\section{A lower-bound lemma}
\label{sec:the-lower-bound-lemma}

Let $L$ be a language over some alphabet $\ggS$. This section proves a general lemma for lower-bounding the number of states in a \tdfa\ solving~$L$. Section~\ref{sec:the-lower-bound} uses this lemma for $\ggS=\ggS_h$ and $L=\owl_h$ to show a lower bound of~$\ggV(h^2)$.

So, fix a \tdfa\ $M$ that solves~$L$ with states~$Q$. (Since $M$ is fixed, we drop unnecessary subscripts: e.g., for $\lcomp_{M,p}(z)$, we just write $\lcomp_p(z)$.)

\subsection{Generic strings}
\label{sec:generic-strings}

Introduced in~\cite{si80a}, generic strings continue to be the main building block that we have for constructing hard inputs for \tdfas\ of various kinds~\cite{ka06icalp,ka07jalc,kakrmo12jcss,ka13iacc,kapi12mfcs}. Here, we are using the flavor of~\cite[Sect.~3.2]{ka07jalc}. This subsection briefly recalls concepts and facts from that work that we need in our arguments.

For $y\in\ggS\s$, consider the set of ``\textit{exit states}'' that $M$~hits right into at the end of a left computation on~$y$:
\[
Q\subLR(y) := \{ q\in Q \mid (\exists p\in Q)(\lcomp_p(y)\text{ hits right into }q) \} \,.
\]
When we \tagLR-extend~$y$ to some~$yz$, the size of this set cannot possibly increase, because the function $\gga_{y,z}$ that maps every $q\in Q\subLR(y)$ to the state (if any) that is hit right by the computation $\comp_{q,|y|+1}(yz)$ is easily seen to be a partial surjection from $Q\subLR(y)$ onto~$Q\subLR(yz)$.  

\begin{lemma}[{\cite[Fact~3]{ka07jalc}}]\label{lem:ayz-is-partial-surjection}
For all $y\kkk z$, the function $\gga_{y,z}$ partially surjects $Q\subLR(y)$ onto~$Q\subLR(yz)$; hence $|Q\subLR(y)|\geq |Q\subLR(yz)|$.  
\end{lemma}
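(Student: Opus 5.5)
We need to show two things about $\gga_{y,z}$: it is well defined as a partial function from $Q\subLR(y)$ to $Q\subLR(yz)$, and every state of $Q\subLR(yz)$ is the image of some state of $Q\subLR(y)$. The size inequality then follows at once, since a partial surjection cannot have a range larger than its domain.

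\emph{Well-definedness.} Let $q\in Q\subLR(y)$ and consider $\comp_{q,|y|+1}(yz)$. This computation starts on the first symbol of~$z$ inside~$yz$, or, if $z=\gge$, just past the right boundary of~$yz$, in which case it immediately hits right into~$q$ by Footnote~\ref{foot:computation-from-outside-input}. Since $M$ is deterministic, the computation is unique. Either it loops, or it hits left, or it hits right into some state~$q'$. We let $\gga_{y,z}(q):=q'$ only in the last case and leave $\gga_{y,z}(q)$ undefined otherwise. We still have to check that $q'\in Q\subLR(yz)$, and this is done in the next step.

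\emph{Range and surjectivity.} The key is a decomposition of left computations on~$yz$. Take any $p\in Q$ and look at $\lcomp_p(yz)$. If it hits right, into some~$r$, then it must cross the boundary between $y$ and~$z$ for a last time, moving rightward from position $|y|$ to position $|y|+1$ in some state~$q$. (If $y=\gge$, the computation starts at position $1=|y|+1$, and we take $q:=p$.) The part of the computation before that first entry into~$z$ is a computation starting at position~$1$ that stays in $1..|y|$. Its behaviour therefore depends only on~$y$, which shows that $\lcomp_p(y)$ hits right into the state~$q$ with which the computation first enters~$z$, so that first-entry state lies in $Q\subLR(y)$.

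More generally, I would use this: every time a left computation on~$yz$ crosses from $y$ into~$z$, the crossing state is in $Q\subLR(y)$. The reason is that each visit to~$y$ that is entered from the left end, or from $z$ moving left, either began at position~$1$, or entered at position~$|y|$. This needs care.

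Simpler route: the continuation of $\lcomp_p(yz)$ from its first entry state~$q$ at position $|y|+1$ is, by determinism, exactly $\comp_{q,|y|+1}(yz)$. Hence it hits right into~$r$, so $\gga_{y,z}(q)=r$. This gives surjectivity. For the range, any $q\in Q\subLR(y)$ is reached by some $\lcomp_p(y)$, and $\lcomp_p(yz)$ coincides with that computation up to the first entry into~$z$ and then continues as $\comp_{q,|y|+1}(yz)$. So if the latter hits right into~$q'$, then $\lcomp_p(yz)$ hits right into~$q'$, and $q'\in Q\subLR(yz)$.

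The main obstacle is just the careful "first crossing" prefix argument: the prefix of a computation on~$yz$ that stays within the positions of~$y$ coincides with the computation on~$y$ itself. This holds because the transitions only read symbols $y_j$, and the first exit to the right is at position $|y|+1$. We also have to handle the degenerate cases $y=\gge$ and $z=\gge$ via the footnote conventions.
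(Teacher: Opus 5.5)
Your final argument (the ``simpler route'') is correct. It is precisely the first-entry decomposition behind the paper's one-line justification, which only asserts that $\gga_{y,z}$ is ``easily seen'' to partially surject $Q\subLR(y)$ onto $Q\subLR(yz)$ and cites \cite[Fact~3]{ka07jalc}. Do fix the slip ``for a last time'' (you need the \emph{first} entry into~$z$), and drop the abandoned ``more generally'' claim: it is actually false, since after the computation re-enters~$y$ from~$z$ a later crossing into~$z$ can occur in a state outside $Q\subLR(y)$, though your proof never uses it.
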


Consequently, if we start with a string~$y$ in a property~$P$ and keep \tagLR-extending it inside~$P$, we will at some point have minimized the size of the set of exit states. When that happens, the string has become ``\tagLR-generic''. 

\begin{definition}[{\cite[Def.~3]{ka07jalc}}]
A string $y$ is \textit{\tagLR-generic} for property~$P$ if $y\in P$ and, for all $yz\in P$, we have $|Q\subLR(y)| = |Q\subLR(yz)|$.
\end{definition}

\begin{lemma}[{\cite[Lemma~3]{ka07jalc}}]\label{lem:LR-generics-exist}
Every property $P\neq\emptyset$ has \tagLR-generic strings.
\end{lemma}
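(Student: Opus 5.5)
The plan is to extract an \tagLR-generic string by a descent argument on the finite quantity $|Q\subLR(\cdot)|$, using Lemma~\ref{lem:ayz-is-partial-surjection} to guarantee that extensions never increase it.

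First, since $P\neq\emptyset$, pick any $y_0\in P$. If $y_0$ is already \tagLR-generic for~$P$, we are done. Otherwise, by the definition there is some $z_0$ with $y_1:=y_0z_0\in P$ and $|Q\subLR(y_0)|\neq|Q\subLR(y_1)|$. By Lemma~\ref{lem:ayz-is-partial-surjection}, $|Q\subLR(y_0)|\geq|Q\subLR(y_1)|$, so in fact $|Q\subLR(y_1)|<|Q\subLR(y_0)|$.

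We iterate this step. From $y_1$, either it is \tagLR-generic, or there is an extension $y_2:=y_1z_1\in P$ with $|Q\subLR(y_2)|<|Q\subLR(y_1)|$, and so on. This builds a sequence $y_0,y_1,y_2,\dots$ of strings in~$P$ whose exit-set sizes strictly decrease. These sizes are nonnegative integers bounded by $|Q|$, so the sequence must stop after at most $|Q|+1$ terms. The last string $y_k$ therefore lies in~$P$, and every extension $y_kz\in P$ satisfies $|Q\subLR(y_kz)|\geq|Q\subLR(y_k)|$. Combined with the reverse inequality from Lemma~\ref{lem:ayz-is-partial-surjection}, this gives equality, so $y_k$ is \tagLR-generic.

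Equivalently, one can argue by extremality: choose $y\in P$ minimizing $|Q\subLR(y)|$ over all of~$P$, which is possible because the values lie in the finite set $\{0,\dots,|Q|\}$. For any $yz\in P$, minimality gives $|Q\subLR(yz)|\geq|Q\subLR(y)|$, and Lemma~\ref{lem:ayz-is-partial-surjection} gives the reverse inequality. This one-line version is probably what I would write.

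There is no real obstacle here. The only point needing care is that the definition of genericity quantifies only over extensions that remain in~$P$. This is exactly why we minimize over~$P$ itself rather than over all strings. Note also that no smoothness or closure property of~$P$ is needed.
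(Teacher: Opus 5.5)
Your proof is correct and follows the paper's: the paper likewise starts from any $y\in P$ and keeps \tagLR-extending inside~$P$ so that $|Q\subLR(\cdot)|$ strictly decreases, and this stops because $M$ has finitely many states. Your extremality version, which picks $y\in P$ minimizing $|Q\subLR(y)|$ and combines minimality with Lemma~\ref{lem:ayz-is-partial-surjection}, is just a compressed form of the same argument and is equally valid.
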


\begin{proof}
Pick $y\in P$. If $y$ is \tagLR-generic, we are done. Otherwise, there is $yz_1\in P$ such that $|Q\subLR(y)|>|Q\subLR(yz_1)|$. So, we replace $y$ with $yz_1$ and repeat: either $yz_1$ is \tagLR-generic and we are done; or there is $yz_1z_2\in P$ such that $|Q\subLR(yz_1z_2)|$ is even smaller, and we replace $yz_1$ with $yz_1z_2$ and repeat. Clearly, this eventually terminates, as $M$ has finitely many states and every step decreases $|Q\subLR(\something)|$.
\end{proof}

In addition, every further \tagLR-extension of a \tagLR-generic string inside the property continues to be \tagLR-generic for the property: 

\begin{lemma}\label{lem:LR-extension-preserves-genericity}
If $y$ is \tagLR-generic for~$P$, then so is every $yz\in P$.
\end{lemma}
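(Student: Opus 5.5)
The claim follows directly from the definition of \tagLR-genericity, combined with the monotonicity of exit-set sizes from Lemma~\ref{lem:ayz-is-partial-surjection}. Let $y$ be \tagLR-generic for~$P$ and let $yz\in P$. We must show two things: that $yz\in P$, which is given, and that for every further extension $yzw\in P$ we have $|Q\subLR(yz)|=|Q\subLR(yzw)|$.

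So fix an arbitrary $w$ with $yzw\in P$. There are two ways to compare sizes.

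\begin{itemize}
\item Since $yz\in P$ and $y$ is \tagLR-generic for~$P$, the definition applied to the extension~$z$ gives $|Q\subLR(y)|=|Q\subLR(yz)|$.
\item Since $yzw=y(zw)\in P$, the definition applied to the extension~$zw$ gives $|Q\subLR(y)|=|Q\subLR(yzw)|$.
\end{itemize}

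Chaining the two equalities yields $|Q\subLR(yz)|=|Q\subLR(y)|=|Q\subLR(yzw)|$, which is exactly what genericity of $yz$ requires. As $w$ was arbitrary, $yz$ is \tagLR-generic for~$P$.

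This argument does not even need Lemma~\ref{lem:ayz-is-partial-surjection}, because genericity of $y$ directly fixes the sizes at both $yz$ and $yzw$. The lemma would only be needed in a variant definition stated with $\geq$.

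There is no real obstacle here. The one point to check is that the string $zw$ is a legitimate extension of $y$ in the sense of the definition, i.e., that $y(zw)\in P$, and this holds by the choice of~$w$.
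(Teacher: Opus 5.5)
Your proof is correct and is essentially the paper's argument. The paper argues by contradiction, but its key step is the same as yours: apply the \tagLR-genericity of $y$ to the combined extension $zz'$ (your $zw$), compare with the extension $z$, and thereby avoid needing Lemma~\ref{lem:ayz-is-partial-surjection}, which the paper uses only implicitly to turn ``$\neq$'' into ``$>$''.
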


\begin{proof}
Towards a contradiction, say $y$ is \tagLR-generic for~$P$, but some $yz\in P$ is not. Then $|Q\subLR(y)|=|Q\subLR(yz)|$ (since $y$~is \tagLR-generic and $yz\in P$) but there exists $yzz'\in P$ such that $|Q\subLR(yz)|>|Q\subLR(yzz')|$ (since $yz$ is not \tagLR-generic). Then $\tilde z:=zz'$ is such that $y\tilde z\in P$ but $|Q\subLR(y)|>|Q\subLR(y\tilde z)|$, contrary to the \tagLR-genericity of~$y$.
\end{proof}

Finally, the size $|Q\subLR(yz)|$ is upper bounded, not only by the respective size $|Q\subLR(y)|$ for the prefix~$y$ (the ``\textit{prefix-monotonicity}'' given by Lemma~\ref{lem:ayz-is-partial-surjection}), but also by the respective size $|Q\subLR(z)|$ for the suffix~$z$ (a ``\textit{suffix-monotonicity}'').  

\begin{lemma}[{\cite[Fact\,4]{ka07jalc}}]\label{lem:suffix-mononotonicity}
\!For all $y\kkk z$: $Q\subLR(yz)\subs Q\subLR(z)$; so $|Q\subLR(yz)|\leq |Q\subLR(z)|$.
\end{lemma}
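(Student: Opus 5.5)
The plan is to prove the inclusion $Q\subLR(yz)\subs Q\subLR(z)$ directly from the definitions; the inequality on sizes then follows immediately. So fix any $q\in Q\subLR(yz)$. By definition, there is some state $p\in Q$ such that $\lcomp_p(yz)$ hits right into~$q$. The goal is to produce a state $p'$ such that $\lcomp_{p'}(z)$ also hits right into~$q$. That would place $q$ in~$Q\subLR(z)$.

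The key step is to look at the \emph{last} time the computation $c:=\lcomp_p(yz)$ crosses from the $y$-part into the $z$-part. Since $c$ hits right, it ends at position $|yz|\ppp1$. Hence it visits position $|y|\ppp1$, the first symbol of~$z$, at least once; if $z=\gge$, this position is already the final one, which I handle separately below. Let $t$ be the last time step at which $c$ is at position $|y|\ppp1$, and let $p'$ be the state at that step.

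After time~$t$, the computation never returns to position $|y|\ppp1$. It also cannot move left of it without first passing through it. So from time~$t$ on, the head stays within positions $|y|\ppp1,\dots,|yz|\ppp1$. On these positions the symbols read are exactly those of~$z$, shifted by~$|y|$. By determinism of~$\ggd$, the suffix of $c$ from time~$t$ therefore coincides, after shifting positions by~$|y|$, with the computation $\comp_{p',1}(z)=\lcomp_{p'}(z)$. In particular, $\lcomp_{p'}(z)$ never hits left, ends at position $|z|\ppp1$, and does so in the same final state~$q$. So $q\in Q\subLR(z)$.

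The only mild obstacle is bookkeeping at the boundaries. First, one must check that ``never returning to $|y|\ppp1$'' indeed rules out hitting left in the shifted computation. It does: hitting left in $\lcomp_{p'}(z)$ would correspond to reaching position~$|y|$, which requires passing through $|y|\ppp1$ again after time~$t$. Second, there is the degenerate case $z=\gge$. There $Q\subLR(z)=Q$, since every left computation on~$\gge$ immediately hits right into its starting state (see the footnote on computations started outside the input). So the inclusion holds trivially in that case.
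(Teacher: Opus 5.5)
Your proof is correct. It is the standard last-crossing argument behind the cited Fact~4 of \cite{ka07jalc}; the paper itself gives no proof beyond that citation. The suffix of $\lcomp_p(yz)$ after its last visit to position $|y|+1$ is, up to a shift by $|y|$, exactly $\lcomp_{p'}(z)$, and your separate handling of $z=\gge$ is right.

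One wording point needs tightening: why that suffix never reaches position $|y|$. The head may step from $|y|+1$ to $|y|$ at time $t+1$. But it moves in unit steps and ends at $|yz|+1$, so it would then have to revisit $|y|+1$, contradicting the maximality of $t$.
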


Symmetrically to the above, we can also define the set $Q\subRL(y)$ of \textit{exit states} that $M$~hits left into after right computations on~$y$; note the partial surjection $\ggb_{z,y}:Q\subRL(y)\to Q\subRL(zy)$, mapping every~$q$ to the state (if any) hit left by $\comp_{q,|z|}(zy)$; conclude that $|Q\subRL(y)|$ cannot increase when we \tagRL-extend~$y$ (\textit{suffix-monotonicity}); define \textit{\tagRL-generic strings} and show that they exist for every $P\neq\emptyset$; show that \tagRL-extensions preserve \tagRL-genericity; and argue that $Q\subRL(z)\sups Q\subRL(zy)$ and thus \textit{prefix-monotonicity} also holds.

In the end, we are interested in strings that are generic in both directions, and we are guaranteed to have them for non-empty properties that are also smooth. 

\begin{definition}[{\cite[Def.~3]{ka07jalc}}]
A string is \textit{generic} if it is both \tagLR- and \tagRL-generic.
\end{definition}

\begin{lemma}[{\cite[Lemma\,3]{ka07jalc}}]\label{lem:generics-exist}
Every smooth property $P\neq\emptyset$ has generic strings.
\end{lemma}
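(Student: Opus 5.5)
Our plan is to first obtain an \tagLR-generic string for~$P$ and then turn it into one that is also \tagRL-generic, using smoothness to glue the two directions together. Since $P\neq\emptyset$, Lemma~\ref{lem:LR-generics-exist} gives an \tagLR-generic string $y\in P$. Its symmetric counterpart, which holds by the symmetric development sketched above, gives an \tagRL-generic string $z\in P$. By smoothness there is some $x\in\ggS\s$ such that $w:=yxz\in P$. We will argue that this $w$ is generic.

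\emph{\tagLR-genericity of~$w$.} This follows directly. Because $w=y(xz)\in P$ is an \tagLR-extension of $y$ inside~$P$, Lemma~\ref{lem:LR-extension-preserves-genericity} makes $w$ \tagLR-generic for~$P$.

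\emph{\tagRL-genericity of~$w$.} Here we use the symmetric version of Lemma~\ref{lem:LR-extension-preserves-genericity}: if $z$ is \tagRL-generic for~$P$, then so is every \tagRL-extension $uz\in P$. This symmetric statement is proved exactly as the original, with suffixes replaced by prefixes and $Q\subLR$ by $Q\subRL$. Assume $z$ is \tagRL-generic, $uz\in P$, and some $u'uz\in P$ satisfies $|Q\subRL(u'uz)|<|Q\subRL(uz)|$. Genericity of $z$ gives $|Q\subRL(uz)|=|Q\subRL(z)|$, and then $\tilde u:=u'u$ contradicts the genericity of~$z$. Taking $u:=yx$, we have $uz=w\in P$, so $w$ is \tagRL-generic.

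Hence $w$ is both \tagLR- and \tagRL-generic, so it is generic. The only point needing care is that each direction's genericity is defined relative to extensions that stay inside~$P$. This is exactly why we need $w$ itself to lie in~$P$, and that is what smoothness supplies; without it, concatenating $y$ and $z$ could leave the property. No other step looks like a real obstacle, since the two preservation lemmas are independent and apply to $w$ simultaneously.
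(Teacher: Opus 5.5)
Your proof is correct and is essentially the paper's own argument: take an \tagLR-generic and an \tagRL-generic string in $P$, glue them with the infix supplied by smoothness, and conclude via Lemma~\ref{lem:LR-extension-preserves-genericity} and its symmetric variant. Your explicit check of the symmetric preservation lemma is a harmless addition that the paper leaves implicit.
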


\begin{proof}
Since $P\neq\emptyset$, there exist $x,z\in P$ that are respectively \tagLR- and \tagRL-generic for~$P$ (by Lemma~\ref{lem:LR-generics-exist} and its symmetric variant). Since $P$~is smooth, there exists~$y$ such that $w:=xyz\in P$. Then $w$~is both \tagLR- and \tagRL-generic for~$P$, as a \tagLR-extension of~$x$ and a \tagRL-extension of~$z$ (Lemma~\ref{lem:LR-extension-preserves-genericity} and its symmetric variant). Therefore, $w$~is generic for~$P$.
\end{proof}

\subsection{Exit sizes}
\label{sec:exit-sizes}

The previous section studied how the size of the set $Q\subLR(y)$ changes as $y\in P$ is \tagLR-extended within~$P$. Call this number $a(y):=|Q\subLR(y)|$ the ``\textit{\tagLR-size}'' of~$y$. We now study how this number changes when $y$~is replaced by other strings. 

We first note that switching between \tagLR-generic strings for~$P$ preserves the \tagLR-size, provided that $P$ is smooth. Hence, this number is a feature of~$P$.  

\begin{lemma}\label{lem:LR-generics-have-same-LRsize}
Let $x\kkk z$ be \tagLR-generic strings for a smooth~$P$. Then $a(x)=a(z)$.
\end{lemma}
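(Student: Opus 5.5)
We compare both strings through a common \tagLR-extension that the smoothness of~$P$ supplies. Since $x,z\in P$ and $P$~is smooth, there is some $y\in\ggS\s$ with $w:=xyz\in P$. We then bound $a(w)$ twice, once using the prefix~$x$ and once using the suffix~$z$.

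First, $w=x(yz)$ is a \tagLR-extension of~$x$ that stays inside~$P$. Because $x$~is \tagLR-generic for~$P$, the definition of genericity gives $a(x)=|Q\subLR(x)|=|Q\subLR(xyz)|=a(w)$. Second, $w=(xy)z$ has $z$ as a suffix, so suffix-monotonicity (Lemma~\ref{lem:suffix-mononotonicity}) gives $Q\subLR(w)\subs Q\subLR(z)$, hence $a(w)\le a(z)$. Putting the two together yields $a(x)\le a(z)$.

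Exchanging the roles of $x$ and~$z$, smoothness again supplies some $y'$ with $zy'x\in P$. The same two steps, now using the \tagLR-genericity of~$z$ and then suffix-monotonicity with suffix~$x$, give $a(z)=a(zy'x)\le a(x)$. Hence $a(x)=a(z)$.

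No step here is a real obstacle. The one point to get right is the direction of each inequality. Genericity of the prefix turns an inequality into an equality, while the suffix contributes only an upper bound. Because of this asymmetry, a single concatenation $xyz$ gives only $a(x)\le a(z)$, and we need the second concatenation $zy'x$ to obtain the reverse inequality. Smoothness is used exactly to build these two concatenations inside~$P$.
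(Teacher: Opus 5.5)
Your proposal is correct and follows the paper's proof exactly. Smoothness gives $xyz\in P$, the \tagLR-genericity of $x$ gives $a(x)=a(xyz)$, suffix-monotonicity gives $a(xyz)\le a(z)$, and the reverse inequality comes from the symmetric argument, which you spell out via $zy'x$ where the paper just says ``symmetrically''.
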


\begin{proof}
Suppose $P$~is smooth and strings $x,z$ are \tagLR-generic for it. Since $P$ is smooth, there exists $y$ such that $xyz\in P$. Then $a(x)=|Q\subLR(x)|=|Q\subLR(xyz)|$, by the definition of $a(x)$ and since $xyz\in P$ \tagLR-extends the \tagLR-generic~$x$. But also $|Q\subLR(xyz)|\leq|Q\subLR(z)|=a(z)$, by suffix monotonicity (Lemma~\ref{lem:suffix-mononotonicity}) and the definition of $a(z)$. Overall, $a(x)\leq a(z)$. Finally, $a(z)\leq a(x)$ is also true, symmetrically.
\end{proof}

\begin{definition}\label{def:LR-size}
The \emph{\tagLR-size} $a(P)$ of a smooth~$P\neq\emptyset$ is the common \tagLR-size of all \tagLR-generic strings of~$P$. 
\end{definition}

Moreover, $a(P)$~is a lower bound for the \tagLR-size of any string in~$P$, and is reached by exactly the \tagLR-generic strings of~$P$.  

\begin{lemma}\label{lem:LR-size-characterizes-LR-generics}
Let $P\neq\emptyset$ be smooth. Every $x\in P$ satisfies $a(x)\geq a(P)$; and the equality holds iff $x$ is \tagLR-generic for~$P$.
\end{lemma}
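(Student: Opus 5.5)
The plan is to reduce everything to the prefix-monotonicity of Lemma~\ref{lem:ayz-is-partial-surjection} and to the fact, from Lemma~\ref{lem:LR-generics-have-same-LRsize} and Definition~\ref{def:LR-size}, that all \tagLR-generic strings of~$P$ share the \tagLR-size $a(P)$. Fix any $x\in P$. By Lemma~\ref{lem:LR-generics-exist}, $P$ has some \tagLR-generic string~$z$. Since $P$ is smooth, there is an infix~$y$ with $xyz\in P$.

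\emph{First claim: $a(x)\geq a(P)$.} By prefix-monotonicity (Lemma~\ref{lem:ayz-is-partial-surjection}), $a(x)\geq a(xyz)$. Now $xyz\in P$ is a \tagLR-extension of~$z$? No: it is a \tagRL-extension of $z$, so Lemma~\ref{lem:LR-extension-preserves-genericity} does not apply to it directly. The clean route is different. Take $xyz$ and note that it lies in~$P$. Starting from $xyz$, the procedure of Lemma~\ref{lem:LR-generics-exist} produces an extension $xyzw\in P$ that is \tagLR-generic for~$P$. Then, by prefix-monotonicity twice and Definition~\ref{def:LR-size},
\[
a(x)\geq a(xyz)\geq a(xyzw)=a(P).
\]
In fact the smoothness step is unnecessary here. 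It suffices to \tagLR-extend $x$ itself inside $P$ to a \tagLR-generic $xw\in P$, again via the procedure of Lemma~\ref{lem:LR-generics-exist} started at~$x$, and then $a(x)\geq a(xw)=a(P)$.

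\emph{Second claim: equality holds iff $x$ is \tagLR-generic.}
\begin{itemize}
\item If $x$ is \tagLR-generic, then $a(x)=a(P)$ by Definition~\ref{def:LR-size}, which rests on Lemma~\ref{lem:LR-generics-have-same-LRsize}.
\item Conversely, suppose $a(x)=a(P)$ and let $xz'\in P$ be arbitrary. Prefix-monotonicity gives $a(xz')\leq a(x)=a(P)$. Applying the first claim to $xz'\in P$ gives $a(xz')\geq a(P)$. Hence $a(xz')=a(x)$ for every $xz'\in P$, and $x\in P$, which is exactly the definition of \tagLR-genericity.
\end{itemize}

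There is no real obstacle. The only care point is making sure that the extension used in the first claim stays inside~$P$, and that the \tagLR-generic string it reaches legitimately has size $a(P)$. This is why the argument invokes Lemma~\ref{lem:LR-generics-have-same-LRsize}, which needs the smoothness of~$P$, rather than an arbitrary generic string.
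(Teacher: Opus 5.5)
Your proof is correct and, once you drop the detour through the smoothness infix $y$, it is the paper's argument. You extend $x$ inside $P$ to an \tagLR-generic $xw$ via the procedure of Lemma~\ref{lem:LR-generics-exist}, and for the converse you sandwich $a(xz')$ between prefix-monotonicity and the first claim; the abandoned first route is also valid but redundant, so you could simply delete it.
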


\begin{proof}
Let $x\in P$. As in the proof of Lemma~\ref{lem:LR-generics-exist}, $x$~can be \tagLR-extended to some \tagLR-generic string $xz$ for~$P$. Then $a(x)=|Q\subLR(x)|\geq|Q\subLR(xz)|=a(xz)=a(P)$ (by Lemma~\ref{lem:ayz-is-partial-surjection} and the definitions of $a(\something),a(P)$), hence $a(x)\geq a(P)$.

Now let~$x$ be \tagLR-generic for~$P$. Then $a(x)=a(P)$ (Def.~\ref{def:LR-size}). Conversely, assume $a(x)=a(P)$. Pick any \tagLR-extension $xz\in P$. We know $|Q\subLR(x)|\geq|Q\subLR(xz)|$, by Lemma~\ref{lem:ayz-is-partial-surjection}. But we also know that $|Q\subLR(x)|\leq|Q\subLR(xz)|$, because  $|Q\subLR(x)|=a(x)=a(P)$ (by definition of $a(x)$ and our assumption) and $a(P)\leq|Q\subLR(xz)|$ (by the first part of our proof, and since $xz\in P$). Overall, $|Q\subLR(x)|=|Q\subLR(xz)|$ for the arbitrary \tagLR-extension of $x$ in~$P$. So, $x$~is \tagLR-generic for~$P$.
\end{proof}

Finally, the \tagLR-sizes of distinct smooth properties can be compared, if one \tagLR-extends to the other:

\begin{lemma}\label{lem:LR-extension-affects-property-LR-size}
If $P,P'\!\neq\emptyset$ are smooth and $P$ \tagLR-extends to $P'$\!\!, then $a(P)\geq a(P')$.
\end{lemma}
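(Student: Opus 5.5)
We take an \tagLR-generic string of~$P$, push it into~$P'$ by a suffix, and compare \tagLR-sizes along the way. Since $P\neq\emptyset$ is smooth, Lemma~\ref{lem:LR-generics-exist} gives a string $x$ that is \tagLR-generic for~$P$. By Definition~\ref{def:LR-size}, $a(x)=a(P)$. Since $P$ \tagLR-extends to~$P'$, there is some $y\in\ggS\s$ with $xy\in P'$.

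Next we bound $a(xy)$ from both sides.
\begin{itemize}
\item By prefix-monotonicity (Lemma~\ref{lem:ayz-is-partial-surjection}), $a(xy)=|Q\subLR(xy)|\leq|Q\subLR(x)|=a(x)$.
\item Since $xy\in P'$ and $P'\neq\emptyset$ is smooth, the first part of Lemma~\ref{lem:LR-size-characterizes-LR-generics} (applied to $P'$) gives $a(xy)\geq a(P')$.
\end{itemize}
Chaining these, $a(P')\leq a(xy)\leq a(x)=a(P)$, which is the claim.

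There is no real obstacle. The only thing to check is that each cited lemma is applied to the right property. Smoothness of~$P$ is needed only so that $a(P)$ is well defined, i.e.\ so that Definition~\ref{def:LR-size} applies. Smoothness of~$P'$ is needed so that $a(P')$ is defined and Lemma~\ref{lem:LR-size-characterizes-LR-generics} applies to~$P'$. Note that $x$ need not be \tagLR-generic for~$P'$, and $xy$ need not be \tagLR-generic for~$P'$; the lower bound $a(xy)\geq a(P')$ holds for every string of~$P'$, so neither fact is required.
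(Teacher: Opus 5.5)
Your proof is correct and follows the paper's argument step for step. You take an \tagLR-generic $x$ for $P$ and a suffix $y$ with $xy\in P'$, then chain $a(P)=a(x)\geq a(xy)\geq a(P')$ using Lemma~\ref{lem:ayz-is-partial-surjection} and the first part of Lemma~\ref{lem:LR-size-characterizes-LR-generics}.
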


\begin{proof}
Pick any \tagLR-generic string~$x$ for $P$. Since $P$ \tagLR-extends to $P'$, there exists $y$ such that $xy\in P'$. 
Then $a(P)=a(x)=|Q\subLR(x)|\geq|Q\subLR(xy)|\geq a(P')$, where the two equalities are by the definitions of $a(P)$ and $a(x)$ (and since $x$~is \tagLR-generic) and the two inequalities hold by Lemmas~\ref{lem:ayz-is-partial-surjection} and~\ref{lem:LR-size-characterizes-LR-generics}, respectively.  
\end{proof}

Symmetrically to the above, we can also define the \textit{\tagRL-size} $b(y):=|Q\subRL(y)|$ of a string~$y$; prove that it is common for all \tagRL-generic strings of a smooth~$P$; define it as the \textit{\tagRL-size} $b(P)$ of~$P$; show that it lower bounds the \tagRL-size of all $x\in P$ and is met iff $x$~is \tagRL-generic; and prove $b(P')\leq b(P)$ when a smooth~$P$ \tagRL-extends to a smooth~$P'$. We then also prove the following.

\begin{lemma}\label{lem:suffix-of-choice-affects-property-exit-sizes}
If $P,P'\neq\emptyset$ are smooth and $P$ has suffix of choice into~$P'$, then both $a(P)\geq a(P')$ and\/ $b(P)\geq b(P')$.
\end{lemma}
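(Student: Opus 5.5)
The plan is to treat the two inequalities separately. Each one comes down to a single string built from the ``suffix of choice'' hypothesis, combined with the monotonicity facts of Sections~\ref{sec:generic-strings} and~\ref{sec:exit-sizes}.

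For $a(P)\geq a(P')$, I will just observe that having a suffix of choice into~$P'$ is stronger than \tagLR-extending to~$P'$. Since $P\neq\emptyset$, fix any $v\in P$ and let $u$ be the string given by the hypothesis. Then every $x\in P$ satisfies $xuv\in P'$, so $y:=uv$ witnesses the \tagLR-extension. Lemma~\ref{lem:LR-extension-affects-property-LR-size} then gives $a(P)\geq a(P')$ directly.

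For $b(P)\geq b(P')$, the freedom to choose the ending~$v$ is what matters.
\begin{enumerate}
\item Pick $v\in P$ to be \tagRL-generic for~$P$; it exists by the symmetric variant of Lemma~\ref{lem:LR-generics-exist}. By the symmetric version of Definition~\ref{def:LR-size}, $b(v)=b(P)$.
\item Let $u$ be the string that the hypothesis provides for this~$v$.
\item Take any $x\in P$, which exists since $P\neq\emptyset$. Then $w:=xuv\in P'$.
\item The string $w$ is an \tagRL-extension of~$v$, obtained by prepending $xu$. The partial surjection $\ggb_{xu,v}$, i.e.\ the symmetric counterpart of Lemma~\ref{lem:ayz-is-partial-surjection}, gives $b(w)=|Q\subRL(xuv)|\leq|Q\subRL(v)|=b(P)$.
\item Since $w\in P'$ and $P'$ is smooth and nonempty, the symmetric counterpart of Lemma~\ref{lem:LR-size-characterizes-LR-generics} gives $b(w)\geq b(P')$.
\end{enumerate}
Chaining the inequalities from the last two steps yields $b(P')\leq b(w)\leq b(P)$.

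I do not expect a real obstacle. The only point needing care is applying the \tagRL\ monotonicity in the right direction: prepending to a string cannot increase its \tagRL-size. This is exactly why we need the suffix to \emph{end} in an \tagRL-generic~$v$ of our choice. A mere suffix into~$P'$ would control only the \tagLR\ side. Note also that the universal quantifier over $x\in P\cup P'$ in the hypothesis is stronger than we need: one $x\in P$ suffices for both parts.
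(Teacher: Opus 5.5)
Your proof is correct. The \tagLR\ half is the same as the paper's. The \tagRL\ half uses the same sandwich $b(P')\leq b(w)\leq b(P)$, but a different copy of~$v$ carries the genericity, and so a different monotonicity fact does the work. The paper takes $v$ generic and sets $x:=v$. It then gets $b(vuv)\leq b(v)$ from prefix-monotonicity, $Q\subRL(vuv)\subs Q\subRL(v)$, which uses the \emph{leading}~$v$. You keep $x\in P$ arbitrary and get $b(xuv)\leq b(v)$ from the partial surjection $\ggb_{xu,v}$, which uses the \emph{trailing}~$v$; so you only need $v$ to be \tagRL-generic. Both are fine.

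The paper's route does show that your closing remark is wrong. Prefix-monotonicity gives $b(xy)\leq b(x)$ for all $x,y$. So if you choose $x$ \tagRL-generic for~$P$ and take any $xy\in P'$, you already get $b(P)=b(x)\geq b(xy)\geq b(P')$. Hence mere \tagLR-extension of $P$ to $P'$ yields both inequalities, because exit sizes cannot grow under extension on either side. The freedom to choose the ending $v$ is not what makes this lemma work. That stronger hypothesis is needed in the Main Lemma, where the appended string must end in the generic~$\ggu$ and must work uniformly over $P\cup P'$, so that it can be iterated.
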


\begin{proof}
Since $P$~has suffix of choice into~$P'$, it clearly \tagLR-extends to~$P'$, therefore $a(P)\geq a(P')$ (by Lemma~\ref{lem:LR-extension-affects-property-LR-size}), and we just need to prove that $b(P)\geq b(P')$.

Let $v$~be any generic string for~$P$ (Lemma~\ref{lem:generics-exist}). Then $v\in P$ and, since $P$~has suffix of choice into~$P'$, some $u$~is such that the suffix~$uv$ forces a string $xuv$ into~$P'$, no matter which $x\in P\cup P'$ we pick. Let us pick $x=v$. Then $vuv\in P'$, and it suffices to prove that $b(P)\geq b(vuv)\geq b(P')$. 

The second inequality follows by the symmetric variant of~Lemma~\ref{lem:LR-size-characterizes-LR-generics} and since $vuv\in P'$. The first one follows from the fact that $b(P)=b(v)=|Q\subRL(v)|\geq|Q\subRL(vuv)|=b(vuv)$, where the three equalities are by the definitions of $b(P)$ (since $v$~is \tagRL-generic for~$P$), $b(v)$, and $b(vuv)$, and the inequality is by prefix-monotonicity (the symmetric variant of Lemma~\ref{lem:suffix-mononotonicity}).
\end{proof}

\subsection{The lemma}
\label{sec:the-lemma}

We are now ready to state and prove our main lemma.

\begin{lemma}[Main Lemma]\label{lem:main}
Suppose language $L\subs\ggS\s$ is solved by \tdfa~$M$. Let\/ $\emptyset\neq P,P'\subs\ggS\s$ be two smooth properties separated by~$L$. If $P$~has suffix of choice into~$P'$, then the exit sizes of $P,P'$ on $M$ satisfy $a(P)\geq a(P')$ and $b(P)\geq b(P')$, and at least one of these inequalities is strict.
\end{lemma}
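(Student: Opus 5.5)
The two non-strict inequalities are exactly Lemma~\ref{lem:suffix-of-choice-affects-property-exit-sizes}, so the work is to show that equality in both is impossible. We argue by contradiction: assume $a(P)=a(P')$ and $b(P)=b(P')$. Fix a generic string $v$ for~$P$ (Lemma~\ref{lem:generics-exist}) and, using the suffix of choice, a string $u$ such that $xuv\in P'$ for every $x\in P\cup P'$. Put $y:=uv$. Let $x\in P$ be arbitrary and let $z$ be generic for~$P'$. We plan to show that $M$ cannot distinguish $xy$ from $zy$ in any context. Since $xy,zy\in P'$, we then compare $x$ with~$z$ through~$y$, and the separation of $P,P'$ will give the contradiction.

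The first step is to establish genericity of the relevant strings. Take $x$ generic for~$P$ (after \tagLR-extending it inside~$P$ if needed; it suffices to choose $x=v$). Then $xy=vuv$ satisfies $a(vuv)\le a(v)=a(P)=a(P')\le a(vuv)$, so $vuv$ is \tagLR-generic for~$P'$ by Lemma~\ref{lem:LR-size-characterizes-LR-generics}. Likewise $b(vuv)\le b(v)=b(P)=b(P')\le b(vuv)$ by prefix-monotonicity, so $vuv$ is \tagRL-generic, and hence generic for~$P'$. Similarly $zuv\in P'$ is generic for~$P'$: it is \tagLR-generic by Lemma~\ref{lem:LR-extension-preserves-genericity}, and $b(zuv)\le b(v)=b(P')$, which gives \tagRL-genericity.

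The main step, and the expected obstacle, is to show that for generic $w=vuv$, the strings $vw$-type replacements behave identically. Concretely, the states $Q\subLR(v)$ map bijectively through $\gga_{v,uv}$ onto $Q\subLR(vuv)$, because sizes are equal. Since $Q\subLR(vuv)\subs Q\subLR(uv)$, and $|Q\subLR(uv)|$ is squeezed similarly, every computation entering $uv$ from the left ends in the same exit states regardless of what lies to the left. Our plan is to show that, for any context $s\,\cdot\,t$, the full computations on $s\,vuv\,t$ and $s\,zuv\,t$ hit the same sequence of boundary crossings at the left border of the final $v$. Right computations that enter $y$ from $t$ and exit left through~$v$ use states in $Q\subRL(v)=Q\subRL(vuv)$. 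By the bijectivity of $\ggb$ and the equal sizes, these exits are determined independently of the prefix. To make this work, we would trace the computation as an alternating sequence of crossings of the boundary before the suffix~$v$, and use the bijections to show that the two inputs induce the same transfer behaviour on~$v$'s boundary states. This is the delicate part, because the crossings interleave in both directions.

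With that in hand, both $svuvt$ and $szuvt$ are accepted or rejected together for all $s,t$. Writing $s\,x\,(uv\,t)$, this contradicts separation of $P,P'$ via the pair $x=v\in P$ and $z\in P'$: the context $(s,t)$ produced by separation for $(v,z)$ must be adjusted to include $uv$. To handle this, we would instead apply separation to $(vuv, zuv)$ after checking that these inherit the separation through the forced suffix. If that adjustment fails, the fallback is to apply the indistinguishability argument directly to $v$ versus~$z$, each padded by the generic suffix. That fallback is where we would first look if the main argument breaks down.
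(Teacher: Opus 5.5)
There is a genuine gap: you try to make the wrong pair of strings indistinguishable. Both $vuv$ and $zuv$ lie in $P'$. So even a complete proof that $M$ treats $s\,vuv\,r$ and $s\,zuv\,r$ identically for all contexts $s,r$ would not contradict the separation of $P$ and $P'$, which only concerns pairs with one string in $P$ and one in $P'$.

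Neither proposed repair works. The separating context for $(v,z)$ is arbitrary and need not begin with $uv$. Two strings of $P'$ need not be separated at all: for $\owl_h$ with $P'=P(C')$, the strings $vuv$ and $zuv$ have the same connectivity and are never separated. Your assertion that computations entering $uv$ from the left exit ``regardless of what lies to the left'' also overstates what equal sizes give. You only get that $\alpha_{v,uv}$ is a bijection of $A=Q\subLR(v)$ onto itself, i.e.\ a permutation. It may be nontrivial, so $M$ can still tell $v$ apart from $vuv$; even the natural pair $(v,vuv)$ does not work with a single copy of $uv$.

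The missing idea is pumping the pair $v\in P$ versus $v(uv)^t\in P'$.
\begin{itemize}
\item Apply the suffix-of-choice property repeatedly: first with $w=v\in P$, then with $w=v(uv)^{t-1}\in P'$. This is exactly where the quantification over $w\in P\cup P'$ is needed, and it gives $v(uv)^t\in P'$ for every $t\geq1$.
\item Under the assumption $a(P)=a(P')$ and $b(P)=b(P')$, the maps $\alpha:=\alpha_{v,uv}$ on $Q\subLR(v)$ and $\beta:=\beta_{vu,v}$ on $Q\subRL(v)$ are permutations. Moreover $\alpha_{v,(uv)^t}=\alpha^t$ and $\beta_{(vu)^t,v}=\beta^t$.
\item Choose $t$ to be a common multiple of the orders of $\alpha$ and $\beta$, so both powers become the identity.
\item Cut the computation on $s\,v\,r$ at its full traversals of $v$, and splice in the identity traversals of $v(uv)^t$. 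This gives a computation on $s\,v(uv)^t\,r$ with the same outcome.
\end{itemize}
This single $t$ works for every context. So whatever context separation supplies for the pair $(v,\,v(uv)^t)$, $M$ decides both strings identically, which is the contradiction. Your genericity computations for $vuv$ and $zuv$ are correct but do not feed into this argument.
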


\begin{proof}
Inequalities $a(P)\geq a(P')$ and $b(P)\geq b(P')$ follow from Lemma~\ref{lem:suffix-of-choice-affects-property-exit-sizes}. So we just need to prove that at least one of the two is strict. Toward a contradiction, suppose that neither is, namely $a(P)=a(P')$ and $b(P)=b(P')$.

Let $\ggu$ be generic for $P$ (Lemma~\ref{lem:generics-exist}). Since $P$~has suffix of choice into~$P'$, there exists~$x$ such that the suffix~$x\ggu$ forces every string $wx\ggu$ into~$P'$, for all $w\in P\cup P'$:
\begin{equation}\label{equ:xθ-is-suffix-into-P'}
    (\forall w\in P\cup P')(wx\ggu\in P')
\end{equation}
\begin{claim}\label{clm:all-θxθ...xθ-are-in-P'}
For all $t\geq1$, we have $\ggu(x\ggu)^t\in P'$.    
\end{claim}

\begin{proof}
By induction on~$t\geq1$. For $t=1$, apply~\eqref{equ:xθ-is-suffix-into-P'} for $w:=\ggu$ (which is in~$P$ as generic string), to conclude that $P'$~contains $w x\ggu=\ggu x\ggu$. For $t\geq2$, apply~\eqref{equ:xθ-is-suffix-into-P'} for $w:=\ggu(x\ggu)^{t-1}$ (which is in~$P'$ by the inductive hypothesis), to conclude that $P'$ also contains $w x\ggu=\ggu(x\ggu)^{t-1} x\ggu=\ggu(x\ggu)^t$.
\end{proof}

Now, since $P,P'$ are separated by~$L$, we know that, for every $t\geq1$, there exists a context $u_t,v_t$ such that exactly one of $u_t\ggu v_t$ and $u_t\ggu(x\ggu)^tv_t$ is in~$L$ (since $\ggu\in P$ and $\ggu(x\ggu)^t\in P'$). In what follows, we find~$t\geq1$ such that $M$~decides identically on $u_t\ggu v_t$ and $u_t\ggu(x\ggu)^t v_t$, contradicting the assumption that $M$~solves~$L$.

We first focus on the string $\ggu x\ggu$ and the function $\gga_{\ggu,x\ggu}$ that partially surjects $Q\subLR(\ggu)$ to $Q\subLR(\ggu x\ggu)$ (Lemma~\ref{lem:ayz-is-partial-surjection}). Letting $\gga:=\gga_{\ggu,x\ggu}$ and $A:=Q\subLR(\ggu)$, we prove:
\begin{claim}\label{clm:α-permutes-A}
$\gga$ is a permutation of~$A$.
\end{claim}

\begin{proof}
We know $Q\subLR(\ggu x\ggu)$ is a subset of $Q\subLR(\ggu)$ (Lemma~\ref{lem:suffix-mononotonicity}); and at least as big, because $|Q\subLR(\ggu x\ggu)|\geq a(P')$ (by Lemma~\ref{lem:LR-size-characterizes-LR-generics} and since $\ggu x\ggu\in P'$) and $a(P')=a(P)$ (our assumption) and $a(P)=|Q\subLR(\ggu)|$ (by definition of $a(P)$ and since $\ggu$ is \tagLR-generic). So, it must be $Q\subLR(\ggu x\ggu)=Q\subLR(\ggu)=A$, which implies that $\gga=\gga_{\ggu,x\ggu}$ is total and injective. Since it is also surjective, it bijects~$A$ to~$A$. (Fig.~\ref{fig:computations}c.)
\end{proof}

Moreover, by standard arguments (see, e.g., \cite[Facts~7,8]{ka07jalc}), we know that the same holds for the respective function on other strings~$\ggu(x\ggu)^t$ (Fig.~\ref{fig:computations}c):

\begin{landscape}
\begin{figure}
\raisebox{1.8cm}{(a)}\quad\hspace{0.75pt}%
\raisebox{0.555cm}{\includegraphics[height=4cm]{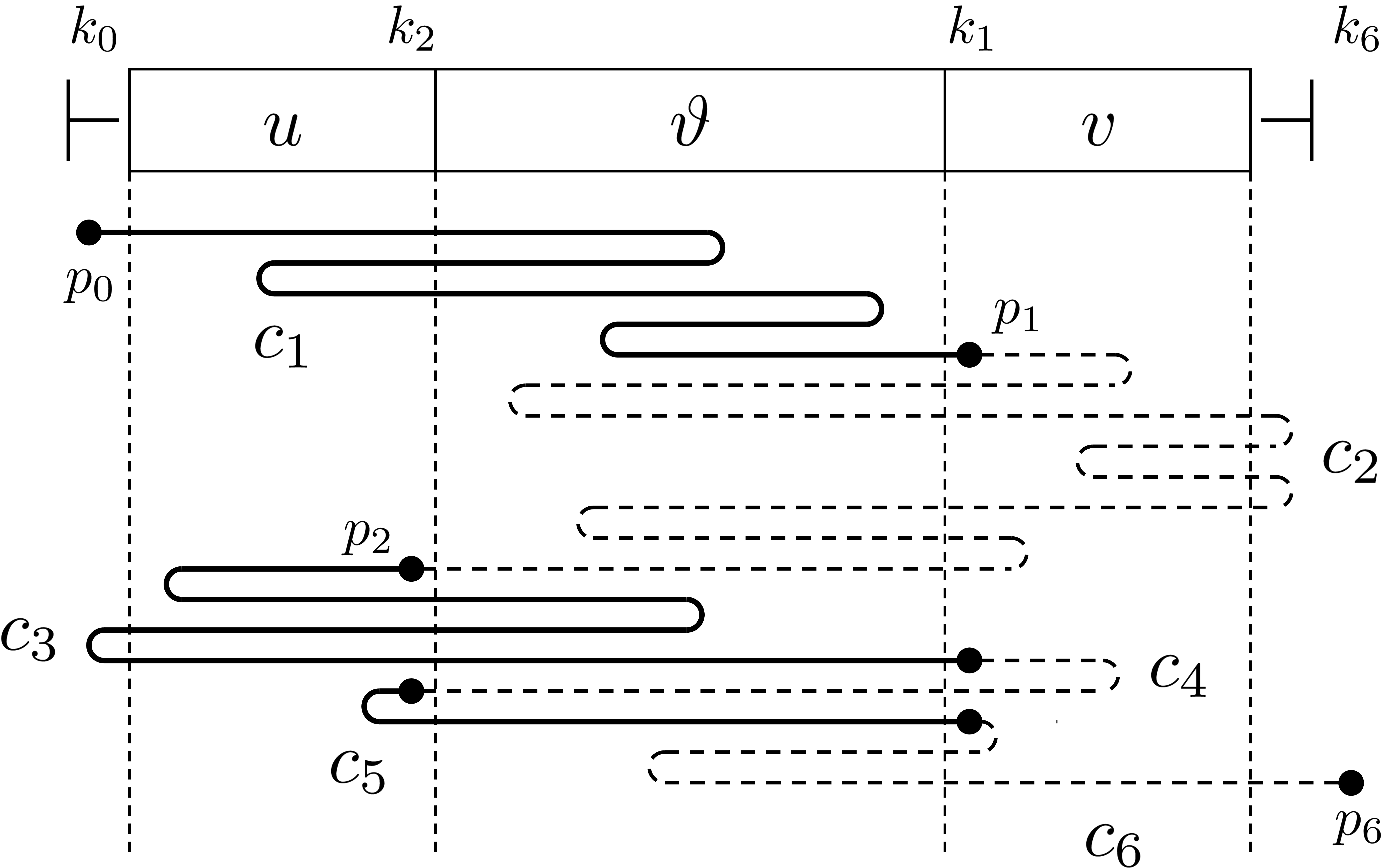}}
\qquad\hspace{1em}
\includegraphics[height=4.25cm]{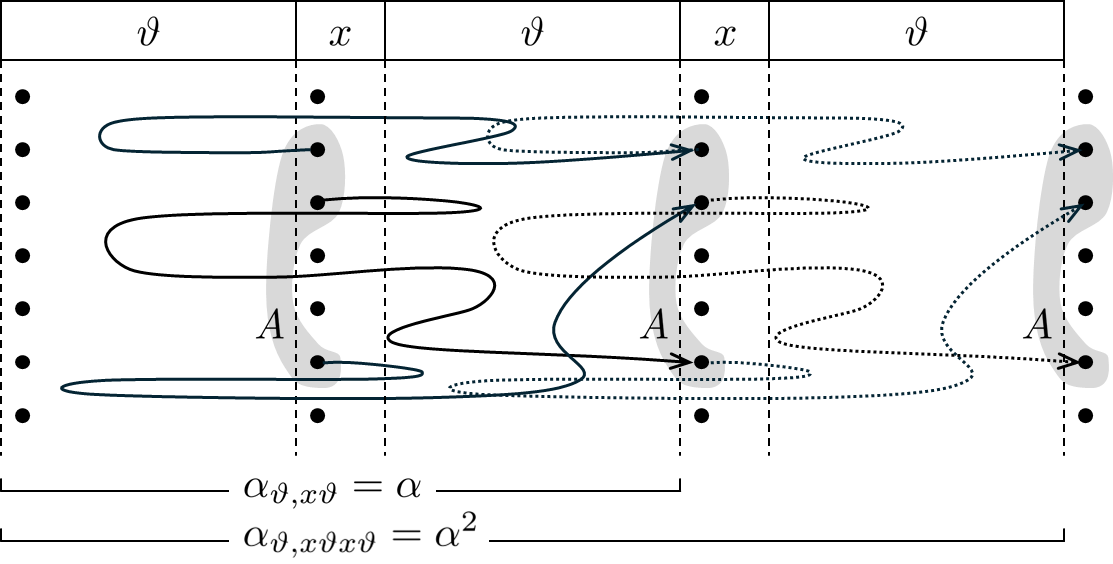}
\quad\raisebox{1.8cm}{(c)}
\par\bigskip
\raisebox{2.3cm}{(b)}\quad
\includegraphics[height=5.25cm]{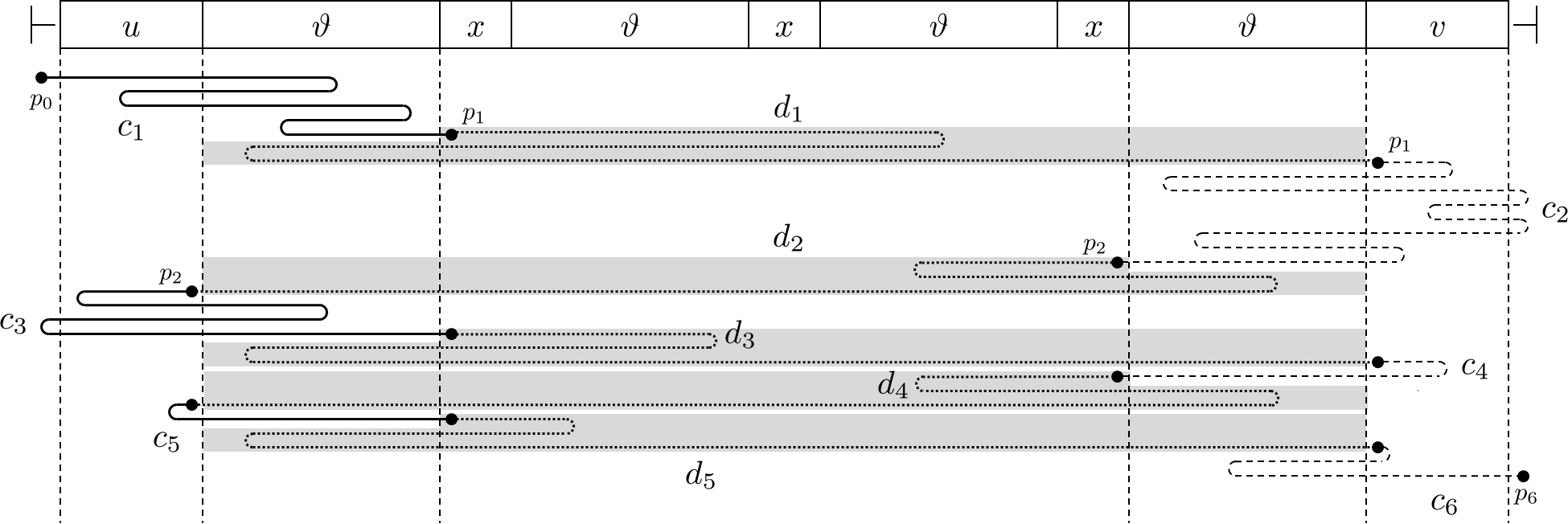}
\caption{(a)~A computation on $u\ggu v$ with $m+2=7$ crucial points; and the corresponding $m+1=6$ computation infixes $c_1,\dots,c_6$, odd (bold) and even (dashed). %
(b)~The computation $c'$ on $u\ggu(x\ggu)^tv$ (here $t=3$), created by joining the concatenations $c_i'$ of the~$c_i$ and~$d_i$. %
(c)~How $\gga:=\gga_{\ggu,x\ggu}$ (solid lines) permutes $A:=Q\subLR(\ggu)$; and $\gga_{\ggu,(x\ggu)^t}$ (solid \&\ dotted) permutes $A$, too, sometimes as identity (here $t=2)$.}
\label{fig:computations}
\end{figure}
\end{landscape}

\begin{claim}
For all $t\geq1$, the function $\gga_{\ggu,(x\ggu)^t}$ is the permutation $\gga^t$ of~$A$.    
\end{claim}
Finally, by symmetric arguments, the partial surjection $\ggb:=\ggb_{\ggu x,\ggu}$ (cf.~comments after Lemma~\ref{lem:suffix-mononotonicity}) of $B:=Q\subRL(\ggu)$ to~$Q\subRL(\ggu x\ggu)$ is also a permutation of~$B$; and for every $t\geq1$, the respective mapping $\ggb_{(\ggu x)^t,\ggu}$ is the permutation $\ggb^t$ of~$B$.

Now, let $t\subLR,t\subRL\geq1$ be such that $\gga^{t\subLR}$ and $\ggb^{t\subRL}$ become the identity maps $\ID_A$ on~$A$ and $\ID_B$ on~$B$, respectively. Then for $t=t\z:=t\subLR\cdot t\subRL$ and for the string $\ggu(x\ggu)^t$, we have $\gga_{\ggu,(x\ggu)^t}=\gga^t=(\gga^{t\subLR})^{t\subRL}=(\ID_A)^{t\subRL}=\ID_A$ and likewise $\ggb_{(\ggu x)^t,\ggu}=\ggb^t=\ID_B$. Intuitively, this means that no \tagLR-traversal of $\ggu(x\ggu)^t$ by our \tdfa~$M$ ``notices'' the $t$~copies of $x\ggu$ to the right of the leftmost~$\ggu$; neither any \tagRL-traversal ``notices'' the $t$~copies of~$\ggu x$ to the left of the rightmost~$\ggu$. This becomes an issue, when we place the two strings $\ggu$ and $\ggu(x\ggu)^t$ in the context of $u:=u_t$ and $v:=v_t$.

\begin{claim}
$M$~decides identically on $u\ggu v$ and $u\ggu(x\ggu)^t v$.
\end{claim}

\begin{proof}
Consider the computation $c$ of $M$ on $u\ggu v$ (Fig.~\ref{fig:computations}a). Let $0\leq\ell\leq\infty$ be the number of times $c$~fully traverses~$\ggu$ (i.e., crosses into~$\ggu$ from one side and, after computing strictly inside~$\ggu$, crosses out of~$\ggu$ on the opposite side). Note that, either $\ell=\infty$ and $c$~loops in a way that fully traverses~$\ggu$ infinitely often; or $\ell<\infty$ and, after the $\ell$th traversal, $c$~ends with a suffix that either lies entirely inside $\lend u\ggu$ and is infinite, if $\ell$~is even; or lies entirely inside $\ggu v\rend$ (except possibly for the last configuration, if it exists) and is infinite or finite, if $\ell$~is odd.

Now, for $1\leq i<\ell\ppp1$, let the $i$th \textit{crucial point} of~$c$ be the configuration $(p_i,k_i)$ at the end of the $i$th full traversal, so that $k_i$ is either $|u\ggu|\ppp1$ (odd~$i$) or~$|u|$ (even~$i$); let the $0$th crucial point $(p_0,k_0)$ be the start configuration $(q\start,0)$; and, in case $c$~halts, let the $\ell\ppp1$st crucial point $(p_{\ell+1},k_{\ell+1})$ be the last configuration, off~$\rend$. Then call~$c_i$ the infix of~$c$ from the $i-1$st to the $i$th crucial point (inclusive), if $1\leq i<\ell\ppp1$; or the suffix of~$c$ from the $\ell$th crucial point onward, if $\ell<\infty$ and~$i=\ell\ppp1$. Note that every~$c_i$ lies entirely inside either $\lend u\ggu$ (if $i$~is odd) or $\ggu v\rend$ (if $i$~is even), except only for its last configuration (when it exists).

Now, for $1\leq i<\ell\ppp1$, let~$d_i$ be the computation of~$M$ that justifies why $\gga_{\ggu,(x\ggu)^t}(p_i)=\gga^t(p_i)=\ID_A(p_i)=p_i$, if $i$~is odd; or why $\ggb_{(\ggu x)^t,\ggu}(p_i)=\ggb^t(p_i)=\ID_B(p_i)=p_i$, if $i$~is even. Form~$c_i'$ by ``joining'' $c_i$ and $d_i$ at the shared configuration $(p_i,k_i)$, if $i$~is odd; or by ``joining'' a $\ggl$-shifted variant of~$c_i$ and $d_i$ at the shared configuration $(p_i,\ggl+k_i)$, where $\ggl:=|x\ggu|^t$, if $i$~is even (Fig.~\ref{fig:computations}b). If $\ell<\infty$ and $i=\ell\ppp1$, then also let $c_i'$ be~$c_i$, if $i$~is odd; or a $\ggl$-shifted variant of~$c_i$, if $i$~is even. Then ``join'' all $c_i'$ to obtain the full computation $c'$ of~$M$ on $u\ggu(x\ggu)^tv$. 

By construction, $c'$~loops or halts exactly as~$c$: either because both computations loop by oscillating between $u$ and~$v$ infinitely often, if $\ell=\infty$; or because they both end by looping inside~$\lend u\ggu$, if $\ell<\infty$ is even; or because they both end by looping inside~$\ggu v\rend$, if $\ell<\infty$ is odd and $c_{\ell+1}$~is infinite; or because they both end by falling off~$\rend$ into the same state $q\accept$ or $q\reject$, if $\ell<\infty$ is odd and $c_{\ell+1}$~is finite. Therefore, $M$~decides identically on $u\ggu v$ and $u\ggu(x\ggu)^tv$.
\end{proof}

Since exactly one of  $u\ggu v$, $u\ggu(x\ggu)^t v$ is in~$L$, we know $M$~fails to solve~$L$ ---a contradiction.
\end{proof}

\section{The lower bound}
\label{sec:the-lower-bound}

To prove the promised lower bound, we introduce a sequence of $1+N$~non-empty properties
\begin{equation}\label{equ:P0,P1,...PN}
\emptyset\neq P_0,P_1,\dots,P_N\subs \ggS_h\s    
\end{equation}
such that $N=\tbinom{h+1}{2}$ and every two successive ones, $P_{t-1},P_t$ ($1\leq t\leq N$), satisfy the conditions of the Main Lemma: they are smooth (Lemma~\ref{lem:every-Pt-is-smooth}), separated by~$\owl_h$ (Lemma~\ref{lem:successive-Pi-separated-by-OWL}), and such that the earlier one has suffix of choice into the later one (Lemma~\ref{lem:Pt-1-has-suffix-of-choice-into-Pt}). Then the lemma implies that the exit sizes of~$P_{t-1}$ upper bound those of~$P_t$, and strictly so on at least one of the two sides. Therefore, in the sequence of the corresponding pairs of exit sizes
\[
(a(P_0),b(P_0)),\;
(a(P_1),b(P_1)),\;
\dots,\;
(a(P_N),b(P_N)),
\]
no step increases a component; and each step decreases at least one. This is a total of at least $N$ decrements of component, to be divided between the two nonincreasing sequences  
\[
a(P_0),
a(P_1),
\dots,
a(P_N)
\qquad\text{and}\qquad
b(P_0),
b(P_1),
\dots,
b(P_N) \,.
\]
So, at least one of them decreases at least $N/2$~times. Since all numbers are obviously between $|Q|$ and~$0$, it follows that $|Q|\geq N/2=\tfrac{1}{2}\tbinom{h+1}{2}=\frac14(h+1)h=\ggV(h^2)$.

To define the properties of~\eqref{equ:P0,P1,...PN}, we first define a sequence of connectivities $C_0,C_1\dots,C_N$ and then let each~$P_t$ be the property of having connectivity~$C_t$, namely $P_t:=P(C_t)$. So, in the rest of this section, we introduce the connectivities; prove a list of basic facts about them (Lemmas~\ref{lem:properties-of-Ct-with-Et}-\ref{lem:properties-of-Ct}); then use these to prove the desired facts about our properties (Lemmas~\ref{lem:every-Pt-is-smooth}-\ref{lem:Pt-1-has-suffix-of-choice-into-Pt}) and thus our main result (Theorem~\ref{thm:main}).

\subsection{The connectivities}
\label{sec:connectivities}

To construct the connectivities $C_0,C_1,\dots,C_N$, we take the identity matrix~$\IDTYh$ and start flipping its $h^2-h$ non-diagonal bits, from $0$ to~$1$, starting from the strict upper triangle (stage~1) and continuing to the strict lower triangle (stage~2), until all bits are~$1$. In the end (stage~3), we flip all $h^2$~bits together, from~$1$ back to $0$, to get the zero matrix~$\ZEROh$.

\begin{figure}[t] \centering
$\;C_0{=}
\begin{bmatrix}
1 & 0 & 0 & 0 & 0\\
0 & 1 & 0 & 0 & 0\\
0 & 0 & 1 & 0 & 0\\
0 & 0 & 0 & 1 & 0\\ 
0 & 0 & 0 & 0 & 1\\ 
\end{bmatrix}$
$\;C_6{=}
\begin{bmatrix}
1 & 0 & 0 & 0 & 1\\
0 & 1 & 0 & \underline1 & 1\\
0 & 0 & 1 & 1 & 1\\
0 & 0 & 0 & 1 & 1\\
0 & 0 & 0 & 0 & 1
\end{bmatrix}$
$C_{10}{=}
\begin{bmatrix}
1 & \underline1 & 1 & 1 & 1\\
0 & 1 & 1 & 1 & 1\\
0 & 0 & 1 & 1 & 1\\
0 & 0 & 0 & 1 & 1\\
0 & 0 & 0 & 0 & 1
\end{bmatrix}$
\\[2ex]
$C_{11}{=}
\begin{bmatrix}
1 & 1 & 1 & 1 & 1\\
0 & 1 & 1 & 1 & 1\\
0 & 0 & 1 & 1 & 1\\
0 & 0 & 0 & 1 & 1\\
0 & 0 & 0 & \underline1 & 1
\end{bmatrix}$
$C_{12}{=}
\begin{bmatrix}
1 & 1 & 1 & 1 & 1\\
0 & 1 & 1 & 1 & 1\\
0 & 0 & 1 & 1 & 1\\
0 & 0 & \underline1 & 1 & 1\\
0 & 0 & \underline1 & 1 & 1
\end{bmatrix}$
$C_{14}{=}
\begin{bmatrix}
1 & 1 & 1 & 1 & 1\\
\underline1 & 1 & 1 & 1 & 1\\
\underline1 & 1 & 1 & 1 & 1\\
\underline1 & 1 & 1 & 1 & 1\\
\underline1 & 1 & 1 & 1 & 1
\end{bmatrix}$
\caption{Some of the connectivities $C_0,C_1,\dots,C_N$ for $h=5$. Here, $U=10$ and $N=15$. In each~$C_t$, we have underlined every `fresh' $1$, i.e., every $1$ that is not also present in $C_{t-1}$ ($t\neq0$).}
\label{fig:Ci}
\end{figure}

\begin{figure}[b] \centering
$\;E_6{=}
\begin{bmatrix}
0 & 0 & 0 & 0 & 0\\
0 & 0 & 0 & 1 & 0\\
0 & 0 & 0 & 0 & 0\\
0 & 0 & 0 & 0 & 0\\
0 & 0 & 0 & 0 & 0\\ 
\end{bmatrix}$
$\;E_8{=}
\begin{bmatrix}
0 & 0 & 0 & 0 & 0\\
0 & 0 & 1 & 0 & 0\\
0 & 0 & 0 & 0 & 0\\
0 & 0 & 0 & 0 & 0\\
0 & 0 & 0 & 0 & 0\\ 
\end{bmatrix}$
$\;E_8'{=}
\begin{bmatrix}
0 & 0 & 0 & 0 & 0\\
0 & 0 & 1 & 1 & 1\\
0 & 0 & 0 & 0 & 0\\
0 & 0 & 0 & 0 & 0\\
0 & 0 & 0 & 0 & 0\\ 
\end{bmatrix}$
\\[2ex]
$D_{11}{=}
\begin{bmatrix}
0 & 0 & 0 & 0 & 0\\
0 & 0 & 0 & 0 & 0\\
0 & 0 & 0 & 0 & 0\\
0 & 0 & 0 & 0 & 0\\
0 & 0 & 0 & 1 & 0\\ 
\end{bmatrix}$
$D_{12}{=}
\begin{bmatrix}
0 & 0 & 0 & 0 & 0\\
0 & 0 & 0 & 0 & 0\\
0 & 0 & 0 & 0 & 0\\
0 & 0 & 1 & 0 & 0\\
0 & 0 & 1 & 0 & 0\\ 
\end{bmatrix}$
$D_{12}'{=}
\begin{bmatrix}
0 & 0 & 1 & 1 & 1\\
0 & 0 & 1 & 1 & 1\\
0 & 0 & 1 & 1 & 1\\
0 & 0 & 1 & 1 & 1\\
0 & 0 & 1 & 1 & 1\\ 
\end{bmatrix}$
\caption{Some of the auxiliary matrices for building $C_0,C_1,\dots,C_N$ for $h=5$.}
\label{fig:Ei-Di}
\end{figure}

\textit{In the first stage}, we flip the $U:=\tbinom h2$ bits of the strict upper triangle, \textit{one cell at a time}, from column~$h$ to column~$2$, and from bottom to top in each column: i.e., we start with cells $(h\mmm1,h),(h\mmm2,h),\dots,(1,h)$, continue with $(h\mmm2,h\mmm1),\dots,(1,h\mmm1)$, etc., all the way to $(2,3),(1,3)$, and~$(1,2)$.  (See upper row of Fig.\,\ref{fig:Ci}.) Specifically, if we let ``\textit{the $t$-th cell}'' refer to the $t$th cell $(i_t,j_t)$ in this sequence ($1\leq t\leq U$), then  
\begin{equation}\label{equ:Ct-stage1}
C_0:=\IDTYh 
\qquad\text{and}\qquad
C_t:=C_{t-1} + E_t  \qquad (1\leq t\leq U), 
\end{equation}
where $E_t\in\bbBh$ is the matrix with only the $t$-th cell~$(i_t,j_t)$ set to~$1$ (Fig.\,\ref{fig:Ei-Di}). An equivalent definition, which will be more helpful in our proofs later, is:
\begin{equation}\label{equ:Ct-stage1'}\tag{\ref{equ:Ct-stage1}$'$}
C_0:=\IDTYh 
\qquad\text{and}\qquad
C_t:=C_{t-1} + E_t' \qquad (1\leq t\leq U),
\end{equation}
where $E_t'\in\bbBh$ is the matrix obtained from $E_t$ by also placing~$1$s in the cells to the right of the $t$-th cell in the same row (so that the~$1$s in~$E_t'$ are exactly at $(i_t,j_t),(i_t,j_t+1),\dots,(i_t,h)$) (Fig.\,\ref{fig:Ei-Di}). Easily, these extra~$1$s are not changing~$C_t$, since $C_{t-1}$ already has~$1$s in those cells and the addition is Boolean. Either way, the end result~$C_U$ of this stage contains~$1$s  in exactly those cells that lie on and above the main diagonal (Fig.\,\ref{fig:Ei-Di}).

\textit{In the second stage}, we flip the bits of the strict lower triangle, \textit{one column at a time}, from column~$h\mmm1$ to column~$1$: that is, we start with the one cell $(h,h\mmm1)$, continue with the two cells $(h,h\mmm2),(h\mmm1,h\mmm2)$, etc., all the way to the $h\mmm2$ cells $(h,2),(h\mmm1,2)\dots,(3,2)$, and the $h\mmm1$ cells $(h,1),(h\mmm1,1),\dots,(2,1)$. (See lower row of Fig.\,\ref{fig:Ci}.) Specifically, if we let ``\textit{the $t$-th column}'' refer to the column $j_t:=h\mmm(t\mmm U)$ of the new $1$s of~$C_t$ ($U\ppp1\leq t\leq N\mmm1$), then  
\begin{equation}\label{equ:Ct-stage2}
C_t:=C_{t-1} + D_t \qquad (U\ppp1\leq t\leq N\mmm1),
\end{equation}
where $D_t\in\bbBh$ is the matrix that contains~$1$s exactly in those cells of the $t$th column that lie strictly below the diagonal (Fig.\,\ref{fig:Ei-Di}). An equivalent definition, which will be more helpful in our proofs later, is:
\begin{equation}\label{equ:Ct-stage2'}\tag{\ref{equ:Ct-stage2}$'$}
C_t:=C_{t-1} + D_t' \qquad (U\ppp1\leq t\leq N\mmm1),
\end{equation}
where $D_t'\in\bbBh$ is the matrix that contains~$1$s in all cells of the $t$th column and of the columns to its right (so that the~$1$s in~$D_t'$ are exactly all cells of columns $j_t,j_t\ppp1,\dots,h$) (Fig.\,\ref{fig:Ei-Di}). As before, these extra~$1$s do not change~$C_t$, as $C_{t-1}$ already has~$1$s in those cells and the addition is Boolean. Either way, the end result~$C_{N-1}$ of this stage contains~$1$s in all of the cells (Fig.\,\ref{fig:Ei-Di}).

\textit{In the third stage}, we flip the bits of the entire matrix, \textit{all at once}, from~$1$ back to~$0$, to get the zero matrix $C_N:=\ZEROh$. 

This concludes the definition of the sequence $C_0,C_1,\dots,C_U,C_{U\ppp1},\dots,C_{N\mmm1},C_N$. Note that the total number of steps is indeed $\binom h2+(h\mmm1)+1=\binom{h+1}2=N$, as promised.

\subsection{Some basic facts}
\label{sec:basic-facts}

Two important facts about the matrices $C_0,C_1,\dots,C_N$ are (i)~that they are idempotent; and (ii)~that the product of every two successive ones (in either order) produces the later one (Lemma~\ref{lem:properties-of-Ct}). To prove these facts, we must first prove some elementary facts about our auxiliary matrices, $E_t'$ and $D_t'$, and how these interact with the $C_t$ (Lemmas~\ref{lem:properties-of-Ct-with-Et} and~\ref{lem:properties-of-Ct-with-Dt}).

\begin{lemma}\label{lem:properties-of-Ct-with-Et}
For all $t=1,2,\dots,U$:\quad $(E_t')^2=\ZEROh$ \;and\; $C_{t-1}E_t' = E_t' = E_t'C_{t-1}$.
\end{lemma}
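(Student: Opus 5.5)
The plan is a direct entrywise computation, based on an explicit description of $C_{t-1}$ during stage~1. Fix $t\in[U]$ and write $(i,j):=(i_t,j_t)$, so $i<j$ and $E_t'$ has $1$s exactly in cells $(i,j),(i,j\ppp1),\dots,(i,h)$ of row~$i$. By the ordering of stage~1 (columns from $h$ down to~$2$, bottom to top inside each column), the $1$s of $C_{t-1}$ are exactly: the diagonal; every strict-upper cell in columns $j\ppp1,\dots,h$; and the cells $(i',j)$ with $i<i'<j$. The only fact we actually need is the following consequence: \emph{in row~$i$, $C_{t-1}$ has $1$s at $(i,i)$ and at $(i,k)$ for all $k>j$, and $0$ at $(i,k)$ for $i<k\leq j$ and for $k<i$; moreover every row $k\geq j$ of $C_{t-1}$ has $1$s only in columns $\geq k$} (upper triangular), with a $1$ at $(k,k)$.

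First I would show $(E_t')^2=\ZEROh$. The $(a,b)$ entry of $(E_t')^2$ is $\bigvee_k E_t'(a,k)E_t'(k,b)$. A nonzero first factor forces $a=i$ and $k\geq j>i$. A nonzero second factor forces $k=i$. These are incompatible, so every entry is $0$.

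Next I would show $C_{t-1}E_t'=E_t'$. Entry $(a,b)$ equals $\bigvee_k C_{t-1}(a,k)E_t'(k,b)$. This is nonzero only if $k=i$ and $b\geq j$, so the product is supported on column set $\{j,\dots,h\}$, with entry $(a,b)$ equal to $C_{t-1}(a,i)$ for $b\geq j$. Hence I must check that column~$i$ of $C_{t-1}$ has a single $1$, at $(i,i)$. Any other $1$ would lie in the strict upper part at $(a,i)$ with $a<i$. But column $i<j$ has not been touched yet in stage~1, since columns are processed from $h$ downward. So this column indeed has only the diagonal $1$.

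Finally I would show $E_t'C_{t-1}=E_t'$. Entry $(a,b)$ equals $\bigvee_k E_t'(a,k)C_{t-1}(k,b)$, which is nonzero only for $a=i$. So I need $\bigvee_{k\geq j}C_{t-1}(k,b)=[b\geq j]$. For $b\geq j$, take $k=b$ and use the diagonal. Conversely, if $C_{t-1}(k,b)=1$ with $k\geq j$, then $b\geq k\geq j$ because $C_{t-1}$ is upper triangular. This step uses only upper-triangularity and the diagonal, which is exactly why the row-extended $E_t'$ from~\eqref{equ:Ct-stage1'} was introduced.

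The main obstacle is stating precisely which cells of $C_{t-1}$ are already $1$ from the stage-1 ordering. Only two facts are really needed, and both follow immediately from the ordering: $C_{t-1}$ is upper triangular with full diagonal, and column $i_t$ (where $i_t<j_t$) is still untouched.
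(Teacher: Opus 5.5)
Your proof is correct and follows essentially the paper's route: both arguments rest on the same two facts, namely that $C_{t-1}$ is upper triangular with full diagonal, and that column $i_t$ of $C_{t-1}$ contains only its diagonal $1$ because no column left of $j_t$ has been touched yet. The paper just packages your entrywise checks through the rank-one factorization $E_t'=e_ir_j$ and the identities $r_je_i=0$, $C_{t-1}e_i=e_i$, and $r_jC_{t-1}=r_j$.
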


\begin{proof}
Fix~$t$ and let $(i,j):=(i_t,j_t)$ be ``the $t$-th cell''. Note that $i<j$, since cell~$(i,j)$ lies strictly above the main diagonal. Also note that $C_{t-1}$ has~$0$ in cell $(i,j)$; hence, the $1$s in its upper strict triangle are not covering all of the $j$th column; so, none of them is in the $i$-th column (since $i<j$); so, the $i$th column of $C_{t-1}$ has only a single~$1$, on the diagonal.  

Let $e_i\in\bbB^{h\times 1}$ be the column vector with a single~$1$ at cell~$i$; and $r_j\in\bbB^{1\times h}$ be the row vector with~$1$ in cells~$j,j\ppp1,\dots,h$ and $0$~everywhere else. We then easily verify the following:
\begin{itemize}

\item $E_t'=e_ir_j$. Indeed, the product is an $h\times h$ matrix where the only non-zero row is the $i$-th one (because of $e_i$'s single~$1$ in cell~$i$) and equals the row vector $r_j$, namely it has $1$s in columns~$j,j+1,\dots,h$ ---which is exactly the description of~$E_t'$. 

\item $r_j e_i = 0$. Indeed, the product is a $1\times 1$ matrix, namely a single Boolean value. And it equals~$0$, because $r_j$ has~$0$ in cell~$i$ (since $i<j$).

\item $C_{t-1}e_i=e_i$. Indeed, the product is a $h\times 1$ column vector, where the only non-zero cell is the $i$th one, because the $i$th row of $C_{t-1}$ is the only row with~$1$ in cell~$i$ (since, as observed above, the $i$-th column in $C_{t-1}$ has only a single~$1$, on the diagonal). 

\item $r_j C_{t-1}= r_j$. Indeed, the product is a $1\times h$ row vector, where the only non-zero cells are $j,j+1,\dots,h$, because columns $j,j+1,\dots,h$ are the only columns in $C_{t-1}$ that contain one or more~$1$s in cells $j,j\ppp1,\dots,h$ (since $C_{t-1}$ has $0$s in all cells strictly below the  diagonal). 
\end{itemize}
Using these observations, we easily compute:
\begin{gather*}    
(E_t')^2 
= (e_ir_j)^2 
= e_i r_j e_ir_j 
= e_i\cdot 0 \cdot r_j 
= \ZERO\cdot r_j 
= \ZEROh
\\[1ex]
C_{t-1}E_t' 
= C_{t-1} e_i r_j
= e_i r_j
= E_t' 
\qquad\qquad 
E_t'C_{t-1} 
= e_ir_j C_{t-1}
= e_ir_j
= E_t' \,,
\end{gather*}
where $\ZERO$ is the zero column vector and $\ZEROh$ is the zero $h\times h$ matrix.
\end{proof}

\begin{lemma}\label{lem:properties-of-Ct-with-Dt}
For all $t=U\ppp1,\dots,N\mmm1$:\quad $(D_t')^2=D_t'$ \;and\; $C_{t-1}D_t' = D_t' = D_t'C_{t-1}$.
\end{lemma}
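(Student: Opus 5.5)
Mirroring the proof of Lemma~\ref{lem:properties-of-Ct-with-Et}, we fix $t$ with $U\ppp1\leq t\leq N\mmm1$, let $j:=j_t$ be ``the $t$-th column'', and factor $D_t'$ as an outer product. Let $\mathbf{1}\in\bbB^{h\times1}$ be the all-ones column vector and $r_j\in\bbB^{1\times h}$ the row vector with $1$s exactly in cells $j,j\ppp1,\dots,h$ (as before). Then $D_t'=\mathbf{1}\,r_j$, since every row of the product equals $r_j$, which matches the description of $D_t'$ as having $1$s exactly in all cells of columns $j,\dots,h$. The three identities then reduce to three vector facts:
\[
r_j\mathbf{1}=1, \qquad C_{t-1}\mathbf{1}=\mathbf{1}, \qquad r_jC_{t-1}=r_j .
\]
With these, $(D_t')^2=\mathbf{1}(r_j\mathbf{1})r_j=\mathbf{1}r_j=D_t'$, $C_{t-1}D_t'=(C_{t-1}\mathbf{1})r_j=D_t'$, and $D_t'C_{t-1}=\mathbf{1}(r_jC_{t-1})=D_t'$.

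The first two facts are immediate. $r_j\mathbf{1}$ is the Boolean OR of the nonempty set of entries $j,\dots,h$ of $r_j$, all equal to $1$. $C_{t-1}\mathbf{1}$ has $i$th entry equal to $1$ iff row~$i$ of $C_{t-1}$ contains a $1$, and every row does, because $C_{t-1}\geq\IDTYh$ entrywise.

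The main point is $r_jC_{t-1}=r_j$, which requires understanding the shape of $C_{t-1}$ at the start of stage~2. By construction, $C_{t-1}=C_U+\sum_{U<s<t}D_s$: it has $1$s on and above the diagonal, plus all cells of columns $j\ppp1,\dots,h$ (the columns already filled in earlier stage-2 steps), and $0$s elsewhere, i.e.\ strictly below the diagonal in columns $1,\dots,j$. Entry $k$ of $r_jC_{t-1}$ is the OR of $C_{t-1}(i,k)$ over $i\in\{j,\dots,h\}$. We check it column by column.
\begin{itemize}
\item For $k\geq j\ppp1$, the whole column $k$ is $1$, so the entry is $1$.
\item For $k=j$, the cell $(j,j)$ is on the diagonal, so the entry is $1$.
\item For $k<j$, every $i\geq j>k$ gives a cell strictly below the diagonal in a column $\leq j$, which is still $0$, so the entry is $0$.
\end{itemize}
Hence $r_jC_{t-1}=r_j$. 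The edge case $t=U\ppp1$ (so $j=h\mmm1$ and $C_{t-1}=C_U$) is covered by the same description, with an empty set of filled columns (column $h$ being already all-$1$ except nothing below the diagonal). The only real care needed is stating this description of $C_{t-1}$ correctly, since its formal justification is the short induction along \eqref{equ:Ct-stage2}.
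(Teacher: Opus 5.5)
Your proof is correct and follows the paper's argument exactly: the same factorization $D_t'=\mathbf{1}r_j$, the same three vector identities $r_j\mathbf{1}=1$, $C_{t-1}\mathbf{1}=\mathbf{1}$, $r_jC_{t-1}=r_j$, and the same three-line computation of the desired equalities. Your column-by-column check of $r_jC_{t-1}=r_j$ spells out what the paper's one-sentence justification uses. That is the shape of $C_{t-1}$: columns $j\ppp1,\dots,h$ entirely $1$, and $0$s strictly below the diagonal in columns $1,\dots,j$.
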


\begin{proof}
Fix~$t$ and let~$j:=j_t=h\mmm(t\mmm U)$ be the index of the $t$-th column.

Let~$\ONE\in\bbB^{h\times 1}$ be the column vector containing only~$1$s; and (as in Lemma~\ref{lem:properties-of-Ct-with-Et}) $r_j\in\bbB^{1\times h}$ be the row vector with~$1$s exactly in cells~$j,j\ppp1,\dots,h$. We then easily verify the following:
\begin{itemize}

\item $D_t'=\ONE r_j$. Indeed, the product is an $h\times h$ matrix where every row is a copy of $r_j$, namely it has $1$s in columns~$j,j+1,\dots,h$ ---which is exactly the description of~$D_t'$. 

\item $r_j \ONE = 1$. Indeed, the product is a $1\times 1$ matrix, namely a single Boolean value. And it equals~$1$, because $r_j$ has at least one~$1$.

\item $C_{t-1}\ONE=\ONE$. Indeed, the product is a $h\times 1$ column vector, where a cell~$i$ contains~$1$ iff the $i$th row in $C_{t-1}$ has at least one~$1$; but every row of $C_{t-1}$ does (e.g., on the diagonal). 

\item $r_j C_{t-1}= r_j$. Indeed, the product is a $1\times h$ row vector, where the only non-zero cells are $j,j+1,\dots,h$, because columns $j,j+1,\dots,h$ are the only ones in $C_{t-1}$ with one or more~$1$s in the cells $j,j\ppp1,\dots,h$ (since $C_{t-1}$ has $0$s in all cells below the diagonal and left of column~$j$).%
\footnote{Note that, despite appearances, this fourth condition is not the same as the corresponding fourth condition in the proof of Lemma~\ref{lem:properties-of-Ct-with-Et}: $C_{t-1}$ there and here are different matrices, since $t\leq U$ there and $t\geq U\ppp1$ here.}
\end{itemize}
Using these observations, we easily compute:
\begin{gather*}    
(D_t')^2 
= (\ONE r_j)^2 
= \ONE r_j \ONE r_j 
= \ONE \cdot 1 \cdot r_j 
= \ONE r_j 
= D_t'
\\[1ex]
C_{t-1}D_t' 
= C_{t-1} \ONE r_j
= \ONE r_j
= D_t' 
\qquad\qquad 
D_t'C_{t-1} 
= \ONE r_j C_{t-1}
= \ONE r_j
= D_t' \,,
\end{gather*}
which covers all of the desired equalities.
\end{proof}

\begin{lemma}\label{lem:properties-of-Ct}
For all $t=0,1,\dots,N$:\quad $C_t^2 = C_t$ \;and\;\/ $C_{t-1}C_t = C_t = C_tC_{t-1}$ \;$($if~$t\neq0)$.
\end{lemma}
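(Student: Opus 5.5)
I will prove both claims by induction on~$t$, split along the three stages, and use the Boolean semiring identities from Lemmas~\ref{lem:properties-of-Ct-with-Et} and~\ref{lem:properties-of-Ct-with-Dt}.

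\emph{Base case.} For $t=0$, $C_0=\IDTYh$ is idempotent.

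\emph{Stage~1 ($1\leq t\leq U$).} Assume $C_{t-1}^2=C_{t-1}$ and write $C_t=C_{t-1}+E_t'$ as in~\eqref{equ:Ct-stage1'}. Distributivity holds over the Boolean semiring, so
\[
C_{t-1}C_t=C_{t-1}^2+C_{t-1}E_t'=C_{t-1}+E_t'=C_t,
\]
and symmetrically $C_tC_{t-1}=C_{t-1}^2+E_t'C_{t-1}=C_t$, using $C_{t-1}E_t'=E_t'=E_t'C_{t-1}$. For idempotence,
\[
C_t^2=C_{t-1}^2+C_{t-1}E_t'+E_t'C_{t-1}+(E_t')^2=C_{t-1}+E_t'+E_t'+\ZEROh=C_t,
\]
using $(E_t')^2=\ZEROh$ and $E_t'+E_t'=E_t'$.

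\emph{Stage~2 ($U+1\leq t\leq N-1$).} The argument is the same with $D_t'$ in place of $E_t'$, via~\eqref{equ:Ct-stage2'}. Now $(D_t')^2=D_t'$ instead of $\ZEROh$, but the sum is still $C_{t-1}+D_t'=C_t$, since addition is idempotent. The products with $C_{t-1}$ again come out to $C_{t-1}+D_t'=C_t$. The induction hypothesis $C_{t-1}^2=C_{t-1}$ at $t=U+1$ is supplied by the last step of stage~1.

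\emph{Stage~3 ($t=N$).} Here $C_N=\ZEROh$, so $C_N^2=\ZEROh=C_N$ and $C_{N-1}\ZEROh=\ZEROh=\ZEROh C_{N-1}$ hold trivially.

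The only point needing care is that the earlier lemmas are stated in terms of $C_{t-1}$ exactly as defined. I therefore need the induction hypothesis only for idempotence of $C_{t-1}$, not for the interaction identities, since those are supplied outright by the two lemmas. There is no real obstacle: the main thing to check is that Boolean addition is idempotent, so duplicate terms collapse. That is what makes $E_t'+E_t'=E_t'$ and $D_t'+D_t'+D_t'=D_t'$.
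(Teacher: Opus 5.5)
Your proof is correct and follows essentially the same route as the paper. Both argue by induction through the three stages, expand $(C_{t-1}+E_t')^2$ and $(C_{t-1}+D_t')^2$ by distributivity over the Boolean semiring, invoke Lemmas~\ref{lem:properties-of-Ct-with-Et} and~\ref{lem:properties-of-Ct-with-Dt}, and treat $C_N=\ZEROh$ trivially. The only difference is organizational: you prove the product identities inside each inductive step, while the paper first finishes the idempotence induction and then handles the products in a separate case analysis, which is immaterial.
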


\begin{proof}
The idempotence of~$C_t$ is obvious for $t=N$: $C_N^2=\ZEROh^2=\ZEROh=C_N$. For~$t\leq N\mmm1$, we do induction on~$t$. In the base case ($t=0$), we have  $C_0^2=\IDTYh^2=\IDTYh=C_0$. In the inductive step ($1\leq t\leq N\mmm1$), we take cases according to whether $1\leq t\leq U$ (stage~1); or $U\ppp1\leq t\leq N\mmm1$ (stage~2). 

\textit{In stage~1} (cf.~\eqref{equ:Ct-stage1'}), we have $C_t^2=(C_{t-1}+E_t')^2= C_{t-1}^2 + C_{t-1}E_t' + E_t'C_{t-1} + (E_t')^2$. By the inductive hypothesis and Lemma~\ref{lem:properties-of-Ct-with-Et}, this becomes $C_{t-1} + E_t' + E'_t + \ZEROh$; which is just $C_{t-1}+E'_t$ (as the last three terms Boolean-add to~$E'_t$); which is just~$C_t$ (by~\eqref{equ:Ct-stage1'}).

\textit{In stage~2} (cf.~\eqref{equ:Ct-stage2'}), we have $C_t^2=(C_{t-1}+D_t')^2= C_{t-1}^2 + C_{t-1}D_t' + D_t'C_{t-1} + (D_t')^2$. By the inductive hypothesis and Lemma~\ref{lem:properties-of-Ct-with-Dt}, this becomes $C_{t-1} + D_t' + D'_t + D_t'$; which is just $C_{t-1}+D'_t$ (as the last three terms Boolean-add to~$D'_t$); which is just~$C_t$ (by~\eqref{equ:Ct-stage2'}).

This concludes the induction for the idempotence of $C_t$. 

For the equalities, we take cases according to whether $1\leq t\leq U$ (stage~1); or $U\ppp1\leq t\leq N\mmm1$ (stage~2); or $t=N$ (stage~3). 

\textit{In stage~1} (cf.~\eqref{equ:Ct-stage1'}), we have $C_{t-1}C_t=C_{t-1}(C_{t-1}+E_t')=C_{t-1}^2+C_{t-1}E_t'$. By idempotence and Lemma~\ref{lem:properties-of-Ct-with-Et}, this is $C_{t-1}+E_t'$, which is~$C_t$ (by~\eqref{equ:Ct-stage1'}). For the reverse product, we have 
$C_tC_{t-1}=(C_{t-1}+E_t')C_{t-1}=C_{t-1}^2+E_t'C_{t-1}$, which (by idempotence and Lemma~\ref{lem:properties-of-Ct-with-Et} again) is just $C_{t-1}+E'_t$, namely~$C_t$ again (by~\eqref{equ:Ct-stage1'}). Overall, $C_{t-1}C_t=C_t=C_tC_{t-1}$, as desired.

\textit{In stage~2} (cf.~\eqref{equ:Ct-stage2'}), we have $C_{t-1}C_t=C_{t-1}(C_{t-1}+D_t')=C_{t-1}^2+C_{t-1}D_t'$. By idempotence and Lemma~\ref{lem:properties-of-Ct-with-Dt}, this is $C_{t-1}+D_t'$, which is~$C_t$ (by~\eqref{equ:Ct-stage2'}). For the reverse product, we have 
$C_tC_{t-1}=(C_{t-1}+D_t')C_{t-1}=C_{t-1}^2+D_t'C_{t-1}$, which (by idempotence and Lemma~\ref{lem:properties-of-Ct-with-Dt} again) is just $C_{t-1}+D'_t$, namely~$C_t$ again (by~\eqref{equ:Ct-stage2'}). Overall, $C_{t-1}C_t=C_t=C_tC_{t-1}$, as desired.

\textit{In stage~3} ($t=N$), we have $C_{t-1}C_t=C_{t-1}\ZEROh=\ZEROh=C_t$ and $C_tC_{t-1}=\ZEROh C_{t-1}=\ZEROh=C_t$. Therefore, we again have $C_{t-1}C_t=C_t=C_tC_{t-1}$, as desired.
\end{proof}

\subsection{The desired facts}
\label{sec:desired-facts}

We are now ready to prove the three facts that we promised about the properties~$P_t$ of~\eqref{equ:P0,P1,...PN}.

\begin{lemma}\label{lem:every-Pt-is-smooth}
For all $t=0,\dots,N$, property~$P_t$ is smooth.
\end{lemma}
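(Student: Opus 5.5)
We show smoothness by choosing the infix $y$ to be a string of connectivity~$C_t$ itself, and then using the idempotence $C_t^2=C_t$ from Lemma~\ref{lem:properties-of-Ct}. Fix $t$ and pick arbitrary $x,z\in P_t$, so that $C(x)=C(z)=C_t$. The infix $y$ will not depend on $x$ or $z$ at all.

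The first step is to check that $P_t\neq\emptyset$, i.e., that some string over $\ggS_h$ has connectivity exactly $C_t$. This is immediate, since every Boolean matrix is the connectivity of a one-symbol string. For the symbol $\sigma:=\{(i,j)\mid C_t(i,j)=1\}\in\ggS_h$, the one-long string $\sigma$ has a $1$-long path from node~$i$ of column~$0$ to node~$j$ of column~$1$ exactly when $C_t(i,j)=1$. Hence $C(\sigma)=C_t$, and $y:=\sigma\in P_t$.

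The second step computes the connectivity of $xyz$. By multiplicativity of connectivities under concatenation,
\[
C(xyz)=C(x)\,C(y)\,C(z)=C_t\,C_t\,C_t=C_t,
\]
where the last equality applies idempotence (Lemma~\ref{lem:properties-of-Ct}) twice. So $xyz\in P_t$, which is exactly the smoothness condition.

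There is no real obstacle here. The only point needing care is the case $t=N$, where $C_N=\ZEROh$. There $y$ is the empty-edge symbol $\emptyset\in\ggS_h$, and idempotence of $\ZEROh$ is trivial, so the argument goes through unchanged. Alternatively, one could take $y$ to be the empty string, since $C_t^2=C_t$ already gives $C(xz)=C_t$. But the empty string's connectivity is $\IDTYh$, so that choice needs a small convention check, and the one-symbol $y$ avoids it.
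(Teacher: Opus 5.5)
Your proof is correct and essentially matches the paper's: both pick a fixed infix $y$ of known connectivity and conclude from the idempotence $C_t^2=C_t$ of Lemma~\ref{lem:properties-of-Ct}. The only difference is cosmetic: the paper takes $y$ with identity connectivity (the symbol $\{(i,i)\mid i\in[h]\}$), so that $C(xyz)=C_t\IDTYh C_t=C_t^2=C_t$, whereas your $y$ has connectivity $C_t$ and you use $C_t^3=C_t$.
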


\begin{proof}
Pick any $x\kkk z\in P_t$. Let $y$ be any string with identity connectivity (e.g., the string consisting of the one symbol $\{(i,i)\mid i\in[h]\}$). Then the  connectivity of $xyz$ is $C(xyz) = C(x)C(y)C(z) = C_t \IDTYh C_t = C_t^2 = C_t$, where we used idempotence (Lemma~\ref{lem:properties-of-Ct}). So, $xyz\in P_t$.
\end{proof}

\begin{lemma}\label{lem:successive-Pi-separated-by-OWL}
For all $t=1,\dots,N$, properties~$P_{t-1},P_t$ are separated by~$\owl_h$.
\end{lemma}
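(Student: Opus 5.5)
The plan is to reduce the statement directly to Lemma~\ref{lem:distinct-conns-separated}. Since $P_{t-1}=P(C_{t-1})$ and $P_t=P(C_t)$, that lemma says the two properties are separated by~$\owl_h$ as soon as $C_{t-1}\neq C_t$. So the whole proof amounts to checking that every step of the construction in Section~\ref{sec:connectivities} really changes the matrix. I will split into the three stages, exactly as the construction does.

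In stage~1 ($1\leq t\leq U$), $C_t=C_{t-1}+E_t$, and the $t$-th cell $(i_t,j_t)$ is strictly above the diagonal. I would argue that $C_{t-1}$ has $0$ in that cell. Indeed, $C_{t-1}$ is $\IDTYh$ together with $1$s in exactly the first $t-1$ cells of the stage-1 enumeration. Since the enumeration lists distinct cells, the $t$-th cell is not among them, and it is off the diagonal. Hence $C_t$ has $1$ there while $C_{t-1}$ has $0$, so $C_{t-1}\neq C_t$.

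In stage~2 ($U+1\leq t\leq N-1$), $C_t=C_{t-1}+D_t$. Here I need a cell of the $t$-th column $j_t=h-(t-U)$ strictly below the diagonal that is $0$ in $C_{t-1}$. The cell $(h,j_t)$ will do: it is strictly below the diagonal because $j_t\leq h-1$. Moreover, $C_{t-1}$ only has strict-lower-triangle $1$s in columns $j_t+1,\dots,h-1$, which were handled at earlier steps of stage~2. So that cell flips from $0$ to $1$.

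In stage~3 ($t=N$), $C_{N-1}$ is the all-ones matrix and $C_N=\ZEROh$, which differ (since $h\geq1$).

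I expect no real obstacle here. The only care needed is the bookkeeping that the "fresh" cell was not already set, which follows from the enumeration order. That fact is already implicitly used in the proof of Lemma~\ref{lem:properties-of-Ct-with-Et} ("$C_{t-1}$ has $0$ in cell $(i,j)$"), and I would cite that observation for stage~1.
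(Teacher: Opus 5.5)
Your proposal is correct and follows the paper's proof: reduce to Lemma~\ref{lem:distinct-conns-separated}, then check $C_{t-1}\neq C_t$ stage by stage. They differ at the fresh cell in stage~1, somewhere in the $t$-th column in stage~2, and at all cells in stage~3. Your explicit bookkeeping, such as exhibiting the cell $(h,j_t)$ in stage~2, spells out details the paper states in a single line.
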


\begin{proof}
Follows directly from Lemma~\ref{lem:distinct-conns-separated}, since $C_{t-1}\neq C_t$ for all~$t=1,\dots,N$. Indeed, the two connectivities $C_{t-1}$ and $C_t$ differ: either on the $t$th cell, if $1\leq t\leq U$ (stage~1); or on the $t$th column, if $U\ppp1\leq t\leq N\mmm1$ (stage~2); or on all cells, if $t=N$ (stage~3).
\end{proof}

\begin{lemma}\label{lem:Pt-1-has-suffix-of-choice-into-Pt}
For all $t=1,\dots,N$, property $P_{t-1}$~has suffix of choice into~$P_t$.    
\end{lemma}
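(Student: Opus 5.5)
We must show: for every $v\in P_{t-1}$ there is $u$ with $xuv\in P_t$ for all $x\in P_{t-1}\cup P_t$. The plan is to copy Example~\ref{exa:suffix-of-choice}, choosing $u$ to be \emph{any} string in $P_t$, and to verify membership through the multiplicativity of connectivities. Concretely, fix $v\in P_{t-1}$ and pick an arbitrary $u\in P_t$. Such a string exists because every $C\in\bbBh$ is realized as the connectivity of the one-symbol string $\{(i,j)\mid C(i,j)=1\}$, so $P_t\neq\emptyset$. Then for any $x\in P_{t-1}\cup P_t$ we have
\[
C(xuv)=C(x)\,C_t\,C_{t-1},
\]
and the argument splits according to whether $C(x)=C_{t-1}$ or $C(x)=C_t$.

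If $C(x)=C_t$, then $C(xuv)=C_tC_tC_{t-1}=C_tC_{t-1}=C_t$. Here we use idempotence $C_t^2=C_t$ and then $C_tC_{t-1}=C_t$, both from Lemma~\ref{lem:properties-of-Ct}.

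If $C(x)=C_{t-1}$, then $C(xuv)=C_{t-1}C_tC_{t-1}$. By Lemma~\ref{lem:properties-of-Ct}, $C_{t-1}C_t=C_t$, so this equals $C_tC_{t-1}=C_t$, again by Lemma~\ref{lem:properties-of-Ct}.

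In both cases $xuv\in P(C_t)=P_t$, which is exactly the definition of $P_{t-1}$ having suffix of choice into $P_t$.

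There is no real obstacle, since all the algebraic work was already done in Lemma~\ref{lem:properties-of-Ct}. The only points needing care are the nonemptiness of $P_t$, which lets us choose $u$, and the fact that the identities hold for every $t=1,\dots,N$, including stage~3, where $C_t=\ZEROh$. That stage is covered by the lemma as well, and in any case it is trivial, because multiplying by $\ZEROh$ gives $\ZEROh$.
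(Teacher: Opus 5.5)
Your proposal is correct and follows essentially the same route as the paper: fix $v\in P_{t-1}$, take any $u\in P_t$, and reduce $C(x)\,C_t\,C_{t-1}$ to $C_t$ for both possible values of $C(x)$ using Lemma~\ref{lem:properties-of-Ct}. The paper simplifies $C_tC_{t-1}=C_t$ first and then splits cases, while you split first, which is only a regrouping. Your one-symbol argument for $P_t\neq\emptyset$ is a welcome explicit justification of what the paper calls clear.
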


\begin{proof}
Pick any $v\in P_{t-1}$. Let $u$~be arbitrary in~$P_t$. Pick any~$x\in P_{t-1}\cup P_t$. We compute:
\[
C(xuv)=C(x)C(u)C(v)=C(x)C_tC_{t-1}=C(x)C_t \,,
\] 
where the last equality uses Lemma~\ref{lem:properties-of-Ct}. Now, since $x$ is in~$P_{t-1}$ or $P_t$, the connectivity $C(x)C_t$ is either $C_{t-1}C_t$ or $C_t^2$; both of which equal~$C_t$, again by Lemma~\ref{lem:properties-of-Ct}. Therefore, $xuv\in P_t$.
\end{proof}

\subsection{The theorem}
\label{sec:the-theorem}

We are now ready to carefully state our main result and summarize its proof.

\begin{theorem}\label{thm:main}
Let~$h\geq1$. Let~$M$ be any \tdfa\ that solves $\owl_h$. Then $M$ has at least $\frac14h(h+1)$~states. 
\end{theorem}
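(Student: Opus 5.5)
The plan is to assemble the pieces already proved and apply the Main Lemma (Lemma~\ref{lem:main}) once for each successive pair of properties $P_{t-1},P_t$, where $P_t:=P(C_t)$ for $t=0,1,\dots,N$ and $N=\binom{h+1}{2}$. Fix the \tdfa\ $M$ with state set $Q$ that solves $\owl_h$.

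First, every $P_t$ must be non-empty. For each $C\in\bbBh$ the one-symbol string $\{(i,j)\mid C(i,j)=1\}$ has connectivity exactly~$C$, so $P(C)\neq\emptyset$. By Lemma~\ref{lem:every-Pt-is-smooth}, each $P_t$ is smooth. Now fix $1\leq t\leq N$. By Lemma~\ref{lem:successive-Pi-separated-by-OWL}, $P_{t-1}$ and $P_t$ are separated by $\owl_h$; by Lemma~\ref{lem:Pt-1-has-suffix-of-choice-into-Pt}, $P_{t-1}$ has suffix of choice into $P_t$. The Main Lemma therefore gives $a(P_{t-1})\geq a(P_t)$ and $b(P_{t-1})\geq b(P_t)$, with at least one of the two inequalities strict.

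Next comes the counting. Let $\Delta_a$ and $\Delta_b$ be the numbers of indices $t$ at which the $a$-sequence and the $b$-sequence strictly decrease. Then $\Delta_a+\Delta_b\geq N$, so one of them, say $\Delta_a$, is at least $N/2$. The sequence $a(P_0)\geq\cdots\geq a(P_N)$ consists of integers in $[0,|Q|]$, and each strict decrease lowers it by at least~$1$. Hence
\[
|Q|\;\geq\; a(P_0)-a(P_N)\;\geq\;\Delta_a\;\geq\; N/2\;=\;\tfrac12\tbinom{h+1}{2}\;=\;\tfrac14h(h+1).
\]
The case where $\Delta_b\geq N/2$ is symmetric.

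I expect no real obstacle here, since all the substantive work sits in the Main Lemma and in the algebraic Lemmas~\ref{lem:properties-of-Ct-with-Et}--\ref{lem:properties-of-Ct}. The points that need care are the non-emptiness of each $P_t$, which the Main Lemma requires, and the observation that exit sizes are integers bounded by $|Q|$. For small $h$, for instance $h=1$, where $U=0$ and only stage~3 occurs, the same argument applies unchanged.
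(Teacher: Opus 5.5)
Your proposal is correct and follows the paper's own proof step for step: you apply the Main Lemma to each successive pair $P_{t-1},P_t$, using the smoothness, separation, and suffix-of-choice lemmas, and then split the at least $N$ strict decrements between the two nonincreasing exit-size sequences with values in $\{0,\dots,|Q|\}$. Your one-symbol witness for non-emptiness and your telescoping count are slightly more explicit versions of the paper's ``clearly non-empty'' remark and its count of distinct values.
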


\begin{proof}
For the given~$h$ and for $N=\binom{h+1}{2}$, let $P_0,P_1,\dots,P_N \subs\ggS_h\s$ be the $1\ppp N$ properties of~\eqref{equ:P0,P1,...PN}, which are all non-empty (clearly) and smooth (Lemma~\ref{lem:every-Pt-is-smooth}). For every $t=1,2,\dots,N$, we know that $P_{t-1}$ and~$P_t$ are separated by $\owl_h$ (Lemma~\ref{lem:successive-Pi-separated-by-OWL}); and that $P_{t-1}$ has suffix of choice into~$P_t$ (Lemma~\ref{lem:Pt-1-has-suffix-of-choice-into-Pt}).

For the given $M$ and for $t=0,1,\dots,N$, let $a_t:=a(P_t)$ and $b_t:=b(P_t)$ be the (well-defined, since $P_t$ is smooth) \tagLR-size and \tagRL-size of~$P_t$ with respect to~$M$ (Def.~\ref{def:LR-size} and discussion preceding Lemma~\ref{lem:suffix-of-choice-affects-property-exit-sizes}), which clearly satisfy $0\leq a_t,b_t\leq|Q|$, where $Q$~is the set of states of~$M$. 

Now, fix any $t=1,2,\dots,N$. We easily see that $P_{t-1}$ and $P_t$ satisfy the conditions of our Main Lemma (Lemma~\ref{lem:main}). Hence, $a_{t-1}\geq a_t$; and $b_{t-1}\geq b_t$; and at least one of these two inequalities is strict.

Overall, $M$~implies two weakly decreasing sequences $a_0\geq a_1\geq \cdots \geq a_N$ and
$b_0\geq b_1\geq \cdots \geq b_N$, where at least~$N$ of all inequalities are strict. Clearly then, one of these sequences hosts at least~$\frac12N$ strict inequalities, and thus contains at least~$1+\frac12N$ distinct integers. Since these all lie in $\{0,1,\dots,|Q|\}$, it follows that $1+\frac12N\leq1+|Q|$, which implies $|Q|\geq\frac14h(h+1)$.
\end{proof}

\section{Conclusion}
\label{sec:conclusion}

We proved that every \tdfa\ solving $\owl_h$ needs $\ggV(h^2)$ states and thus, in general, the conversion of \onfas\ to \tdfas\ incurs a quadratic blow-up in the number of states. Although this simply matches the already known lower bound by Chrobak~\cite{ch86,to09} and only slightly exceeds that of~\cite{ka18}, it results from an analysis of \tdfa\ computations which are much richer than the ones realized on unary or on $3$-long inputs.  

We conjecture that the sequence of smooth properties constructed in Sect.~\ref{sec:the-lower-bound} is of maximum possible length. Namely that, in every sequence $\emptyset\neq P_0,P_1,\dots,P_m\subs\ggS_h\s$, if every two successive properties $P_{t-1},P_t$ satisfy the conditions of our Main Lemma (Lemma~\ref{lem:main}), then $m\leq\binom{h+1}{2}$. 

If this is indeed the case, then our lemma in its present form seems not to be helpful in discovering super-quadratic lower bounds for \tdfas\ against~\owl.   
Still, we are hopeful that our Main Lemma might help bound the size of \tdfas\ in other settings. The lack of general lower-bounding tools is a well-known major deficiency in the study of \tdfa~computations. In this work, we deliberately tried to extract from our syllogisms a lemma that is general enough to potentially serve as such a tool.

\nocite{bokoun18}
\newif\ifshortbibvenues

\newcommand{\newconfsl}[3]{\newcommand{#1}{Proceedings of \ifshortbibvenues #2\else the #3\fi}}
\newcommand{\newconfl }[2]{\newcommand{#1}{Proceedings of the #2}}

\newconfsl{\confAFL   }{AFL}   {{I}nternational {C}onference on {A}utomata and {F}ormal {L}anguages}
\newconfsl{\confCIAA  }{CIAA}  {{C}onference on {I}mplementation and {A}pplication of {A}utomata}
\newconfsl{\confCiE   }{CiE}   {{C}onference of {C}omputability in {E}urope}
\newconfsl{\confCCC   }{CCC}   {{C}onference on {C}omputational {C}omplexity}
\newconfsl{\confCSR   }{CSR}   {{I}nternational {C}omputer {S}cience {S}ymposium in {R}ussia}
\newconfsl{\confDCFS  }{DCFS}  {{W}orkshop on {D}escriptional {C}omplexity of {F}ormal {S}ystems}
\newconfsl{\confDLT   }{DLT}   {{I}nternational {C}onference on {D}evelopments in {L}anguage {T}heory}
\newconfsl{\confFCT   }{FCT}   {{I}nternational {S}ymposium on {F}undamentals of {C}omputation {T}heory}
\newconfsl{\confFOCS  }{FOCS}  {{S}ymposium on the {F}oundations of {C}omputer {S}cience}
\newconfsl{\confICALP }{ICALP} {{I}nternational {C}olloquium on {A}utomata, {L}anguages, and {P}rogramming}
\newconfsl{\confICIP  }{ICIP}  {{I}nternational {C}onference on {I}mage {P}rocessing}
\newconfsl{\confIFIP  }{IFIP}  {{I}nternational {F}ederation for {I}nformation {P}rocessing}
\newconfsl{\confISTCS }{ISTCS} {{I}srael {S}ymposium on the {T}heory of {C}omputing {S}ystems}
\newconfsl{\confLATA  }{LATA}  {{L}anguage and {A}utomata {T}heory and {A}pplications}
\newconfsl{\confMBB   }{MBB}   {{G}raduate {S}tudent {C}onference on {T}he {N}ature of {T}hought}
\newconfsl{\confMFCS  }{MFCS}  {{I}nternational {S}ymposium on {M}athematical {F}oundations of {C}omputer {S}cience}
\newconfsl{\confNCMA  }{NCMA}  {{I}nternational {W}orkshop on {N}on-{C}lassical {M}odels for {A}utomata and {A}pplications}
\newconfsl{\confSAGA  }{SAGA}  {{I}nternational {S}ymposium on {S}tochastic {A}lgorithms: {F}oundations and {A}pplications}
\newconfsl{\confSOAM  }{SOAM}  {{S}ymposium on {A}pplied {M}athematics}
\newconfsl{\confSODA  }{SODA}  {{S}ymposium on {D}iscrete {A}lgorithms}
\newconfsl{\confSOFSEM}{SOFSEM}{{I}nternational {C}onference on {C}urrent {T}rends in {T}heory and {P}ractice of {I}nformatics}
\newconfsl{\confSTACS }{STACS} {{S}ymposium on {T}heoretical {A}spects of {C}omputer {S}cience}
\newconfsl{\confSTOC  }{STOC}  {{S}ymposium on the {T}heory of {C}omputing}
\newconfl {\confSWAT  }{{S}ymposium on {S}witching and {A}utomata {T}heory}
\newconfl {\confSWCT  }{{S}ymposium on {S}witching {C}ircuit {T}heory and {L}ogical {D}esign}
\newconfl {\confIMYCS }{{I}nternational {M}eeting of {Y}oung {C}omputer {S}cientists}

\newcommand{\bibvenueskey}{\ifbibvenuesshort{\footnotesize 
AFL    is the \textit{International Conference on Automata and Formal Languages}
CIAA   is the \textit{Conference on Implementation and Application of Automata}.
CiE    is the \textit{Conference of Computability in Europe}
CCC    is the \textit{Conference on Computational Complexity}.
CSR    is the \textit{International Computer Science Symposium in Russia}.
DCFS   is the \textit{Workshop on Descriptional Complexity of Formal Systems}.
DLT    is the \textit{International Conference on Developments in Language Theory}.
FCT    is the \textit{International Symposium on Fundamentals of Computation Theory}.
FOCS   is the \textit{Symposium on the Foundations of Computer Science}.
ICALP  is the \textit{International Colloquium on Automata, Languages, and Programming}.
ICIP   is the \textit{International Conference on Image Processing}.
IFIP   is the \textit{International Federation for Information Processing}.
ISTCS  is the \textit{Israel Symposium on the Theory of Computing Systems}.
LATA   is the \textit{Conference on Language and Automata Theory and Applications}
MBB    is the \textit{Graduate Student Conference on The Nature of Thought}.
MFCS   is the \textit{International Symposium on Mathematical Foundations of Computer Science}.
SAGA   is the \textit{International Symposium on Stochastic Algorithms: Foundations and Applications}.
SOAM   is the \textit{Symposium on Applied Mathematics}.
SODA   is the \textit{Symposium on Discrete Algorithms}.
SOFSEM is the \textit{International Conference on Current Trends in Theory and Practice of Informatics}
STACS  is the \textit{Symposium on Theoretical Aspects of Computer Science}.
STOC   is the \textit{Symposium on the Theory of Computing}.
}\else\fi}
\shortbibvenuestrue 
\bibliographystyle{plain}
\bibliography{biblio/string,biblio/collections,biblio/personal,biblio/computation}

\end{document}